\definecolor{lightgrey}{rgb}{0.9,0.9,0.9}
\definecolor{darkgreen}{rgb}{0,0.3,0}
\newtheorem{thm}{Theorem}
\newtheorem{lemma}{Lemma}
\newtheorem{remark}{Remark}
\newtheorem{assumption}{Assumption}
\newcommand{\RR}{\mathbb R}
\newcommand{\mb}{\mathbf}
\renewcommand{\vec}[1]{\mathbf{#1}}
\newcommand{\latentData}{\vec{X}}
\newcommand{\latentdata}{\vec{x}}
\newcommand{\phylogeny}{{\cal G}}
\newcommand{\tree}{\phylogeny}
\newcommand{\distanceMatrix}{\boldsymbol{\Delta}}
\newcommand{\traitVariance}{\mathbf{\Sigma}}
\newcommand{\mdsSD}{\sigma}
\newcommand{\mdsVariance}{\mdsSD^2}
\definecolor{trevorblue}{rgb}{0.330, 0.484, 0.828}
\definecolor{trevoryellow}{rgb}{0.829, 0.680, 0.306}
\title{Sparse Bayesian multidimensional scaling(s)}
\author[1]{Ami Sheth}
\author[2]{Aaron Smith}
\author[1]{Andrew J.~Holbrook}
\affil[1]{Department of Biostatistics, University of California, Los Angeles}
\affil[2]{Department of Mathematics and Statistics, University of Ottawa}
\begin{document}

\maketitle

\begin{abstract}
Bayesian multidimensional scaling (BMDS) is a probabilistic dimension reduction tool that allows one to model and visualize data consisting of dissimilarities between pairs of objects. Although BMDS has proven useful within, e.g., Bayesian phylogenetic inference, its likelihood and gradient calculations require a burdensome $\mathcal{O}(N^2)$ floating-point operations, where $N$ is the number of data points. Thus, BMDS becomes impractical as $N$ grows large. We propose and compare two sparse versions of BMDS (sBMDS) that apply log-likelihood and gradient computations to subsets of the observed dissimilarity matrix data. Landmark sBMDS (L-sBMDS) extracts columns, while banded sBMDS (B-sBMDS) extracts diagonals of the data. These sparse variants let one specify a time complexity between $N^2$ and $N$. Under simplified settings, we prove posterior consistency for subsampled distance matrices. Through simulations, we examine the accuracy and computational efficiency across all models using both the Metropolis-Hastings and Hamiltonian Monte Carlo algorithms. We observe approximately 3-fold, 10-fold and 40-fold speedups with negligible loss of accuracy, when applying the sBMDS likelihoods and gradients to 500, 1,000 and 5,000 data points with 50 bands (landmarks); these speedups only increase with the size of data considered. Finally, we apply the sBMDS variants to: 1) the phylogeographic modeling of multiple influenza subtypes to better understand how these strains spread through global air transportation networks and 2) the clustering of ArXiv manuscripts based on low-dimensional representations of article abstracts. In the first application, sBMDS contributes to holistic uncertainty quantification within a larger Bayesian hierarchical model. In the second, sBMDS provides uncertainty quantification for a downstream modeling task.
\end{abstract}

\section{Introduction}\label{sec:intro}

Multidimensional scaling (MDS) is a dimension reduction technique that maps pairwise dissimilarity measurements corresponding to a set of $N$ objects to a configuration of $N$ points within a low-dimensional Euclidean space \citep{torgerson1952cmds}. Classical MDS uses the spectral decomposition of a doubly centered matrix derived from the observed dissimilarity matrix to calculate the objects' coordinates. While classical MDS serves as a valuable data visualization tool, probabilistic extensions further enable uncertainty quantification in the context of Bayesian hierarchical models. \cite{ohraftery2001bmds} propose a Bayesian framework for MDS (BMDS) under the assumption that the observed dissimilarities follow independent truncated normal probability density functions (PDFs). BMDS facilitates Bayesian inference of object configurations in a manner that is robust to violations of the Euclidean model assumption and dimension misspecifications \citep{ohraftery2001bmds}. The key benefits of a Bayesian approach to MDS are that it provides uncertainty quantification for the projection itself and conditional distributions that can be easily integrated with other probability models, enabling fully model-based approaches to analyzing dissimilarity data. For example, one may incorporate BMDS into hierarchical modeling frameworks for Bayesian phylogeography \citep{bedford2014, holbrook2021bigbmds, li2023}, clustering \citep{BMDScluster}, and variable selection \citep{lin2019}.

Bayesian phylogeography uses molecular data from species such as viruses, bacteria or pathogens to probabilistically model their evolution over both time and space \citep{lemey2009}. For instance, one can reconstruct viral dispersion patterns to better understand the way viruses spread within and between human populations. The incorporation of BMDS within Bayesian phylogeography allows one to place dissimilarity data between species into a low-dimensional spatial representation while also considering their evolutionary dynamics from genetic data. \cite{bedford2014} simultaneously characterize antigenic and genetic patterns of influenza by combining BMDS with an evolutionary diffusion process on the latent strain locations. They apply BMDS on hemmagglutination inhibition assay data to place the subtypes on a low-dimensional antigenic map. \cite{holbrook2021bigbmds} implement a similar Bayesian phylogenetic MDS model, but perform phylogeographic inference on pairwise distances arising from air traffic data. Additionally, \cite{li2023} use phylogenetic BMDS on pairwise distances stemming from hepaciviruses to infer the viral locations in a lower dimensional geographic and host space.

Unfortunately, BMDS is difficult to scale to big data settings; computing the BMDS log-likelihood and gradient each have $\mathcal{O}(N^2)$ complexity. \cite{bedford2014} partially circumvent this problem by assuming that the observed data follow non-truncated Gaussian distributions, thereby avoiding the costly floating-point operations necessary to evaluate the Gaussian cumulative density function (CDF) in the truncated normal PDFs (\ref{eq:Fll}). However, there are benefits to using the truncated normal distribution: it appropriately accounts for non-negative dissimilarities, and its variance term is always less than that of its corresponding non-truncated normal distribution, resulting in more precise posterior inference. \cite{holbrook2021bigbmds} mitigate BMDS's computational burden through massive parallelization using multi-core central processing units, vectorization and graphic processing units. They obtain substantial performance gains, but parallelization requires expensive hardware. In either case, these models still scale quadratically in the number of objects. We therefore develop a framework that reduces the time complexity to $\mathcal{O}(N)$ by inducing sparsity on the observed dissimilarity matrix. We perform experiments with simulated data and show that our sparse versions of BMDS (sBMDS) obtain significant speedups while preserving inferential accuracy. We then illustrate how one may use sBMDS within a larger hierarchical model by extending the types of phylogeographical models mentioned above under sparse assumptions, implementing sBMDS phylogenetic frameworks on air traffic data to analyze the geographic spread of four influenza subtypes. Additionally, we apply sBMDS to a collection of ArXiv paper abstracts and perform post-hoc clustering on posterior samples of the low-dimensional embeddings. We adopt a ``bagged estimator" approach to propagate uncertainty quantification from sBMDS.

In the following, we present two versions of sparse BMDS and prove that under simplistic conditions, the posterior latent locations are consistent for subsampled dissimilarity matrices (Section \ref{sec:methods}). In Section \ref{sec:Results}, we evaluate the empirical accuracy, sensitivity to model misspecification and computational performance of both methods. We apply sBMDS to the phylogeographic modeling of influenza variants and verify that we obtain similar migration rate estimates for both full and sparse BMDS models (Section \ref{sec:fluapp}). In Section \ref{sec:ERapp}, we apply sBMDS to a dataset representing ArXiv paper abstracts and recover the posterior probability that two manuscripts belong to the same subject-matter category. We conclude by summarizing our findings and discussing future research directions (Section \ref{sec:discussion}).

\section{Methods} \label{sec:methods}

\subsection{Bayesian multidimensional scaling}

Bayesian multidimensional scaling (BMDS) \citep{ohraftery2001bmds} models a set of $N$ objects' locations as latent variables in low-dimensional space under the assumption that the observed dissimilarity measures follow a prescribed joint probability distribution. To set notation: for $A \subset \RR$, let $N_{A}(\mu,\sigma^{2})$ denote the Gaussian distribution truncated to $A$; for $k \in \mathbb{N}$, let $[k] = \{1,2,\ldots,k\}$. Within BMDS, each observed dissimilarity measure $\delta_{nn'}$ is the posited latent measure $\delta_{nn'}^*$ plus a truncated Gaussian error: 
\begin{align} \label{eq:MDSmodel}
    \delta_{nn'} \sim N_{(0, \infty)}(\delta_{nn'}^*, \sigma^2), \ n \ne n', \ n, n' \in [N],
\end{align} where $\delta_{nn'}^* = \sqrt{\sum_{d = 1}^D (x_{nd} - x_{n'd})^2}$ is the Euclidean distance between latent locations $\mathbf{x}_n, \,\mathbf{x}_{n'}$$\in \RR^D$, and $N(\cdot, \cdot)$ represents the normal distribution. These assumptions yield the log-likelihood function
\begin{align}
    \ell(\distanceMatrix, \sigma^2) = -\frac{m}{2} \log(2 \pi \sigma^2) - \sum_{n < n'} \biggr[ \frac{(\delta_{nn'} - \delta_{nn'}^*)^2}{2\sigma^2} + \log \Phi \biggr( \frac{\delta_{nn'}^*}{\sigma} \biggr) \biggr], \label{eq:Fll}
\end{align}    
where $\distanceMatrix = \{\delta_{n n'}\}$ is the symmetric $N \times N$ matrix of observed dissimilarities, $m = N(N-1)/2$ is the number of dissimilarities, and $\Phi(\cdot)$ is the standard normal CDF.  

Many MCMC algorithms, e.g., Hamiltonian Monte Carlo (HMC) (Section \ref{sec:MCMCInf}) and Metropolis-adjusted Langevin algorithm (MALA), use evaluations of gradients for efficient state space exploration. For this model, we take the first derivative of the log-likelihood function \eqref{eq:Fll} with respect to a single row $\latentdata_n$ of $\latentData$, the $N \times D$ matrix of unknown object coordinates to obtain the log-likelihood gradient function
\begin{align}
     \nabla{\latentdata_n}  \ell(\distanceMatrix, \sigma^2) = -\sum_{\substack{n' \in [N], \\ n' \ne n}} \biggr[ \biggr( \frac{(\delta_{nn'}^* - \delta_{nn'})}{\sigma^2} + \frac{\phi(\delta_{nn'}^*/\sigma)}{\sigma \Phi(\delta_{nn'}^*/\sigma)} \biggr) \ \frac{(\mathbf{x}_n - \mathbf{x}_{n'})}{\delta_{nn'}^*} \biggr] \equiv -\sum_{\substack{n' \in [N], \\ n' \ne n}} \mathbf{r}_{nn'}. \label{eq:Fgrad}
\end{align}
Here $\phi(\cdot)$ is the PDF of a standard normal variate, and $\mathbf{r}_{nn'}$ is the contribution of the $n'$th location to the gradient with respect to the $n$th location.

The BMDS log-likelihood \eqref{eq:Fll} and gradient \eqref{eq:Fgrad} both involve summing $\binom{N}{2}$ terms and require $\mathcal{O}(N^2)$ floating point operations. Given the large number of calculations needed, they become computationally cumbersome as the number of objects grows large. Therefore, we propose using a small subset of the data for likelihood and gradient evaluations, namely the sparse BMDS methods (sBMDS). 

\subsection{Sparse likelihoods and their gradients}
For each item $n$, let $J_{n, N} \subset [N] \setminus \{n\}$ be an index set. We consider sparse coupling approaches resulting in log-likelihoods and log-likelihood gradients of the form
\begin{align}
    \ell_{sp}(\distanceMatrix, \sigma^2) = - \sum_{n = 1}^N \sum_{\substack{n' \in J_{n, N}, \\ n' > n}} \biggr[\frac{1}{2} \log(2 \pi \sigma^2) + \frac{(\delta_{nn'} - \delta_{nn'}^*)^2}{2\sigma^2} + \log \Phi \biggr( \frac{\delta_{nn'}^*}{\sigma} \biggr) \biggr], \label{eq:Sll}
\end{align} 
and 
\begin{align}
    \nabla{\latentdata_n}  \ell_{sp}(\distanceMatrix, \sigma^2) = - \sum_{n' \in J_{n, N}} \mathbf{r}_{nn'}. \label{eq:Sgrad}
\end{align} 
We reduce the computational complexity of BMDS by including a small subset of couplings $J_{n, N}$ per object $n$, where $|J_{n, N}| \ll N$. Here, we discuss two possible strategies for choosing $J_{n, N}$. By a slight abuse of notation, we use $[a, b]$ to refer to a closed interval of either reals or integers, where the appropriate set should be obvious from context. The first option is to extract $B \in [N - 1]$ off-diagonal bands of the observed dissimilarity matrix such that $J_{n, N} = [\max(1, n - B), \min(N, n + B)] \backslash \{n\}$ for all $n$. The second approach is to choose $L \in [N]$ objects called ``landmarks" and select each landmark's dissimilarities from the remaining $N - 1$ objects, e.g., $J_{n, N} = [N] \backslash \{n\}$ for $n \in [L]$ and $J_{n, N} = [L]$ for $n \not\in [L]$. Essentially, this strategy retains a rectangular subset of the observed dissimilarity matrix by extracting $L$ columns (rows) of the data. We note there is no loss of generality in taking the first $n$ indices as landmarks rather than an arbitrary set because one can relabel the object indices without affecting the learned geometry. We refer to the first method as banded sBMDS (B-sBMDS) and the second method as landmark sBMDS (L-sBMDS). Alternative strategies for selecting index sets are possible, provided they satisfy Assumption \ref{GraphAssumption} (Section \ref{sec:postcons}). For instance, our B-sBMDS model assumes distances are measured on the real line and bands are defined as a contiguous interval. However, one could explore other forms of banded matrices, e.g., by selecting any set of entries in the distance matrix.

To highlight the difference, we consider a simplified scenario in which the number of objects is five, the latent dimension is two, the BMDS error variance $\mdsVariance$ is 0.25, and the observed dissimilarities are equal to the latent dissimilarity measures $(\delta_{nn'} = \delta^*_{nn'})$. Given the distance and location matrices $$\boldsymbol{\Delta} = \left[ {\begin{array}{ccccc} 
		0.00 & 1.35 & 2.53 & 0.99 & 1.85 \\
		1.35 & 0.00 & 1.54 & 0.76 & 0.50 \\
		2.53 & 1.54 & 0.00 & 1.54 & 1.26 \\
		0.99 & 0.76 & 1.54 & 0.00 & 1.12 \\
		1.85 & 0.50 & 1.26 & 1.12 & 0.00 \\
\end{array}}  \right], \quad \mathbf{X}  = \left[ {\begin{array}{cc} 
0.59 & 0.71  \\
-0.11 & -0.45  \\
0.61 & -1.82  \\
0.63 & -0.28 \\
-0.28 & -0.92  \\
\end{array}}  \right],$$
we compare the sBMDS log-likelihood (Table \ref{tab:sBMDS_llcomp}) and gradient (Table \ref{tab:sBMDS_gradcomp}) calculated from couplings defined by B-sBMDS versus L-sBMDS.

\begin{table}
\begin{center}
        \scalebox{0.9}{
    	\begin{tabular}{c | c | c }
    		\multicolumn{3}{c}{Pairs ($n, n'$)} \\ 
    		 B/L & B-sBMDS & L-sBMDS \\
    		\hline
    		1 & (1, 2); (2, 3); (3, 4); (4, 5)&  (1, 2); (1, 3); (1, 4); (1, 5) \\ 
    		2 & + (1, 3); (2, 4); (3, 5) & + (2, 3); (2, 4); (2, 5) \\  
    		3 & + (1, 4); (2, 5) & + (3, 4); (3, 5)  \\
    		4 &  + (1, 5)& + (4, 5) \\
    	\end{tabular}
        }
\quad
        \scalebox{0.9}{
    	\begin{tabular}{c | c | c }
    		\multicolumn{3}{c}{Log-likelihood values} \\ 
    		B/L & B-sBMDS & L-sBMDS \\
    		\hline
    		1 & -0.885 &  -0.875 \\ 
    		2 & -1.490 & -1.311 \\  
    		3 & -1.743 & -1.756  \\
    		4 & -1.969 & -1.969 \\
    	\end{tabular}
        }
\end{center}
\caption{We extract the $(n, n')$ pair from the off-diagonals of the observed and latent dissimilarity matrices for banded sBMDS (B-sBMDS) versus the columns for landmark sBMDS (L-sBMDS). $B/L$ refers to the number of bands ($B$) or landmarks ($L$), and the $+$ symbol indicates all couplings above are also included. The table on the right shows the calculated log-likelihoods as the number of bands/landmarks increases. Importantly, the bottom log-likelihoods are equal for both sBMDS variants and correspond to the full BMDS log-likelihood.}
\label{tab:sBMDS_llcomp}
\end{table}

\begin{table}
\begin{center}
	\doublespacing
	\scalebox{0.65}{
			\begin{tabular}{c p{2.5cm} p{3cm} p{2.5cm} p{2.5cm} | p{2.5cm} p{2.5cm} p{2.5cm} p{2.5cm}}
				\multicolumn{9}{c}{Banded sBMDS} \\ 
				& Pairs ($n, n'$) &  & & & Gradient \\
				\hline
				& 1 band & 2 bands & 3 bands & 4 bands & 1 band & 2 bands & 3 bands & 4 bands \\ 
				$\mathbf{x}_1$ & (1, 2) & + (1, 3) & + (1, 4) & + (1, 5) & [-.010, .017] & [-.010, .018] & [-.005, .134] & [-.006, .135]\\ 
				$\mathbf{x}_2$ &  (2, 3); (2, 1) & + (2, 4) & + (2, 5) & & [.014, .011] & [.275, .074] & [.071, -.468] & [.071, -.468]\\ 
				$\mathbf{x}_3$ &  (3, 4); (3, 2) & + (3, 5); (3, 1) & & & [-.003, .013] & [-.026, .036] & [-.026, .036] & [-.026, .036]\\ 
				$\mathbf{x}_4$ &  (4, 5); (4, 3) & + (4, 2) & + (4, 1) & & [-.054, -.045]& [-.315, -.108] & [-.321, .009] & [-.321, .009]\\ 
				$\mathbf{x}_5$ & (5, 4) & + (5, 3) & + (5, 2) & + (5, 1) & [.054, .038] & [.077, .015] & [.281, .557] & [.281, .558]\\ 
			\end{tabular}
	}
 
	\scalebox{.65}{
		\begin{tabular}{c p{2.5cm} p{3cm} p{2.5cm} p{2.5cm} | p{2.5cm} p{2.5cm} p{2.5cm} p{2.5cm}}
			\multicolumn{9}{c}{Landmark sBMDS} \\ 
			& Pairs ($n, n'$) &  & & & Gradient \\
			\hline
			& 1 landmark & 2 landmarks & 3 landmarks & 4 landmarks & 1 landmark & 2 landmarks & 3 landmarks & 4 landmarks \\ 
			$\mathbf{x}_1$ & (1, 2 - 5) & & & & [-.006, .135] & [-.006, .135] &[-.006, .135] & [-.006, .135] \\ 
			$\mathbf{x}_2$ & (2, 1) & + (2, 3 - 5) & & & [.010, .017] & [.071, -.468] & [.071, -.468] & [.071, -.468] \\  
			$\mathbf{x}_3$ & (3, 1) & + (3, 2) & + (3, 4 - 5) & & [.000, .000] & [-.003, .006] & [-.026, .036] & [-.026, .036] \\
			$\mathbf{x}_4$ &  (4, 1) & + (4, 2) & + (4, 3) & + (4, 5) & [-.005, .117] & [-.266, .054] & [-.266, .047] & [-.321, .009]\\
			$\mathbf{x}_5$ & (5, 1) & + (5, 2) & + (5, 3) & + (5, 4) & [.000, .000] & [.204, .543] & [.227, .519] & [.281, .558]\\
		\end{tabular}
	}
\end{center}
\caption{We extract the $(n, n')$ pair from the observed and latent dissimilarity matrices to calculate the sBMDS gradients. On the left, the $-$ symbol, as in $(n, a-c)$, indicates pairs $(n, a); (n, b); (n, c)$. For example, pair $(3; 4-5)$ means we include both pair $(3, 4)$ and $(3, 5)$. The $+$ symbol indicates all couplings to the left are also included, and $[\cdot, \cdot]$ represents a vector. On the right is the gradient computed for each $\latentdata_n$ of $\latentData$ as function of the number of bands/landmarks. Extracting the entire column of a landmark point gives the full BMDS gradient in $\RR^D$ whereas banded sBMDS incrementally adds information to the row-wise gradients. Importantly, the rightmost gradients are equal for both sBMDS variants and correspond to the full BMDS gradient.}
\label{tab:sBMDS_gradcomp}
\end{table}

For B-sBMDS, the number of couplings is the number of elements in $B$ bands. The relationship between the number of bands and number of couplings $C$ is $C = \sum_{b = 1}^{B} (N - b)$. We add one less coupling for each additional band. When the number of bands equals $N - 1$, we return to the full BMDS case. Using a subset of the observed dissimilarity matrix reduces the burden of computing the BMDS likelihood and gradient to $\mathcal{O}(NB)$. Similar arguments hold for L-sBMDS, the likelihoods and gradients of which exhibit $\mathcal{O}(NL)$ time complexity. 

For classical MDS, an analogous strategy to L-sBMDS already exists. In MDS, the rate limiting step is the calculation of the top $D$ eigenvalues and eigenvectors from a $N \times N$ matrix. \cite{mdsLM2004} propose applying classical MDS to $L$ landmark points, e.g., an $L \times N$ submatrix of the observed dissimilarity matrix, and then following a distance-based triangulation procedure to determine the remaining object coordinates. L-sBMDS uses the concept of randomly selecting $L$ landmarks as well, but integrates them into the BMDS framework, allowing inference on the entire model. \cite{casecntrl2012} approximate the likelihood of their network data by taking a random subset of objects deemed to have no link, reducing the time complexity from $\mathcal{O}(N^2)$ to $\mathcal{O}(N)$. In the context of a very different network model, they incorporate an array of covariates to model the probability of a link between objects $n$ and $n'$ while our model is simpler, using no outside information to aid in determining locations in a latent space. 

\subsection{Posterior consistency} \label{sec:postcons}

For the following theoretical development, we consider the model
\begin{align}\label{eq:GMDSmodel}
    \delta_{nn'} \sim N_{(0, M)}(\delta_{nn'}^*, \sigma^2), \ n \ne n', \ n, n' \in [N],
\end{align} 
a generalization of (\ref{eq:MDSmodel}) insofar as M can be any number within the interval $(0,\infty)$. Let the latent locations $\latentData$ be sampled from a range of values in the interval $I$. The posterior density function of the unknown parameters ($\latentData, I, \sigma^2, M$) is proportional to $\mathcal{L}(\distanceMatrix| \latentData, \sigma^2, M)$, the BMDS likelihood function of model (\ref{eq:GMDSmodel}), and the priors put on each auxiliary parameter, e.g., 
\begin{align} \label{eq:posterior}
    p(\latentData, I, \sigma^2, M | \distanceMatrix) \propto  \mathcal{L}(\distanceMatrix, \sigma^2, M | \latentData) \times p(\latentData | I) \times p(I) \times p(\sigma^2) \times p(M).
\end{align}
The marginal posterior density function of $\latentData$ is 
\begin{align}\label{eq:postMarg}
    p(\latentData | \distanceMatrix) = \int p(\latentData, I, \sigma^2, M | \distanceMatrix) \, dI \, d\sigma^2 \, dM.
\end{align}

We examine the posterior consistency of subsampled dissimilarity matrices under simple conditions. Fixing some interval $I$, we sample points $x_{1}, \dots, x_{N} \stackrel{i.i.d.}{\sim} N_{I}(0, 1)$. Let $\Delta^*$ be the associated Euclidean distance matrix with entries $\delta^*_{nn'} = |x_n - x_{n'}|$ and $\delta_{nn'}$ be the truncated noisy observations of this matrix sampled from model (\ref{eq:GMDSmodel}). We set a prior on $I$, $M$ and $\sigma^2$ that has compact support and is bounded away from 0 and infinity on its support, e.g., \cite{ohraftery2001bmds, BMDScluster}. In addition, we fix in advance a collection of indices $J_{n, N} \subset [N] \backslash \{n\}$ of observations to keep for each object $n$, treating this choice as non-random in the following. Next, we make some assumptions about which observations are kept.

\begin{assumption} \label{GraphAssumption}
Fix $K \in \mathbb{N}$. Assume there exists a sequence $\{ \ell_{N}\}_{N \in \mathbb{N}}$ and a collection of partitions $\{G_{n, N}^{(k(n))}\}_{k=1}^{K}$ of $J_{n, N}$ with the following properties:
\begin{enumerate}
    \item For all $n \in [N]$ and $k \in [K]$, $|G_{n, N}^{(k(n))}| \geq \ell_{N}$. 
    \item Say $n,n' \in [N]$ are linked by an edge if there exists $k(n), k(n') \in [K]$ so that 
    \begin{align}
        | J_{n,N} \cap J_{n',N} \backslash (G_{n,N}^{(k(n))} \cup G_{n',N}^{(k(n'))})| \geq \ell_{N}.
    \end{align}
    Assume that the graph with these edges and vertex set $[N]$ is a connected graph.
    \item The sequence $\ell_{N}$ satisfies 
    \begin{align}
            \lim_{N \rightarrow \infty} \frac{\ell_{N}}{\log(N)^{2}} = \infty.
    \end{align}
\end{enumerate}
\end{assumption}

\begin{remark} \label{LAssumpt} 
We verify that Assumption \ref{GraphAssumption} holds for L-sBMDS given $L \geq (\ell_N + 1)K$ landmarks and $L \ll N$. We can think of $\ell_N$ as the number of retained entries in the sparsest row (up to a universal constant). Let $\ell_N = \lceil\frac{\sqrt{N}}{2}\rceil$, so that it satisfies part 3 of Assumption \ref{GraphAssumption}. For all $n \in [N], |J_{n, N}| \geq L$ and for objects $n, n' \in [N], |J_{n, N} \cap J_{n', N}| \geq L - 1$. Consider the partition $G_{n, N}^{(k(n))} = [\lfloor\frac{(k -1)L}{K}\rfloor + 1, \lfloor\frac{kL}{K}\rfloor] \backslash \{n\}$. Then for any object $n, n' \in [N]$, $G^{(k(n))}_{n, N} \cup G^{(k(n'))}_{n', N} = \{\lfloor\frac{(k - 1)L}{K}\rfloor + 1,...,\lfloor\frac{kL}{K}\rfloor\} \equiv G_N^{(k)}$ is independent of $n, n'$ and of size $\frac{L}{K}$, which satisfies part 1 of Assumption \ref{GraphAssumption}. Thus in the most conservative case, e.g., $n \in [L]$ and $n' \notin [L]$,
\begin{align*}
    |J_{n, N} \cap J_{n', N} \backslash (G^{(k(n))}_{n, N} \cup G^{(k(n'))}_{n', N})| = |\{1,...,n -1, n + 1,...,L\} \backslash G_N^{(k)}| \\ = (L - 1) - \frac{L}{K} = \frac{L(K-1)}{K} - 1.
\end{align*}
As a result, $|J_{n, N} \cap J_{n', N} \backslash (G^{(k(n))}_{n, N} \cup G^{(k(n'))}_{n', N})| \geq \ell_N$ when $L \geq (\ell_N + 1)K$ for $K > 1$. Notably, the graph in part 2 of Assumption \ref{GraphAssumption} is connected.
\end{remark}

\begin{remark} \label{BAssumpt} 
Similarly, we verify that Assumption \ref{GraphAssumption} holds for B-sBMDS given $B \geq 2 \ell_N + 1$ bands and $B \ll N$. Again, let $\ell_N = \lceil\frac{\sqrt{N}}{2}\rceil$, so that it satisfies part 3 of Assumption \ref{GraphAssumption}. Under B-sBMDS, the ends of a distance matrix have the fewest indices. To ensure that the size of each partition is at least $\ell_N$ at these boundaries, let the number of partitions be $K = \lfloor\frac{B}{\ell_N}\rfloor$, satisfying part 1 of Assumption \ref{GraphAssumption}. For all $n \in [N], |J_{n, N}| \geq B$ and for two consecutive objects $(n < n') \in [N], |J_{n, N} \cap J_{n', N}| \geq B - 1$. To remove the minimal number of common indices between two consecutive objects, let $G_{n, N}^{(k(n))} = \{n + 1,..,n + \ell_N \}$ so that $|G_{n, N}^{(k(n))}| = \ell_N$. Then, $G_{n, N}^{(k(n))} \cup G_{n', N}^{(k(n'))} = \{n', n' + 1, ..., n' + \ell_N\}$ and the cardinality of the intersection is $\ell_N + 1$. Finally, 
\begin{align*}
    |J_{n, N} \cap J_{n', N} \backslash (G_{n, N}^{(k(n))} \cup G_{n', N}^{(k(n'))})| \geq |\{n' + 1, ..., n + B\} \backslash \{n', n' + 1, ..., n' + \ell_N\}| \\ \geq |\{n' + \ell_N + 1,...,n + B\}| \geq (B - 1) - \ell_N
\end{align*}
because we remove $\ell_N$ elements from the union containing at least $B - 1$ elements. Thus when $B \geq 2\ell_N + 1$, $|J_{n, N} \cap J_{n', N} \backslash (G_{n, N}^{(k(n))} \cup G_{n', N}^{(k(n'))})| \geq \ell_N$, and we obtain a connected graph, fulfilling part 2 of Assumption \ref{GraphAssumption}. 

%there are three mutually exclusive conditions to consider: (i) $n - B \leq 1$, (ii) $n - B > 1$ and $n + B \leq N$, (iii) $n + B > N$.  

%Under case (i), $J_{n', N} \cap J_{n, N} = [1:(n' + B)] \backslash \{n', n\}.$ We retain the least number of observations for the first row, so $|J_{n', N} \cap J_{n, N} \backslash (G_{n', N}^{(1)} \cup G_{n, N}^{(1)})|$ is the smallest when $(n', n) = (1, 2)$. Let $G_{1, N}^{(1)} = \{2, 3,\dots \}$ until $|G_{1, N}^{(1)}| = \ell_N$ and likewise $G_{2, N}^{(1)} = \{1, 3, \dots \}$ until $|G_{2, N}^{(1)}| = \ell_N$. Then, $G_{1, N}^{(1)} \cup G_{2, N}^{(1)}= \{1, 2, 3, \dots \}$ = $\{1, G_{1, N}^{(1)}\}$ and $$|J_{1, N} \cap J_{2, N} \backslash (G_{1, N}^{(1)} \cup G_{2, N}^{(1)})| = B - |G_{1, N}^{(1)}| \geq c\ell_N - \ell_N = (c - 1)\ell_N \geq \ell_N$$ when $c \geq 2$. With objects $1$ and $2$ being linked, the remaining objects satisfying case (i) will be linked as well. Due to symmetry, case (iii) is the same as case (i).

%For case (ii), $J_{n, N} = \{n' \in [N] \, : \, |n' - n| \leq B\} \backslash \{ n \},$ and $|J_{n, N}| = 2B \geq 2 c \ell_N$. Thus, we are guaranteed partitions whose sizes satisfy part 1 of Assumption \ref{GraphAssumption}. Additionally, observe that pairs $(n,n')$ satisfying case (ii) are linked by an edge whenever $(n-n')$ less than $\frac{B}{2}$. Thus, every consecutive pair is linked. Given the continuity of the three conditions, we obtain a connected graph, satisfying part 2 of Assumption \ref{GraphAssumption}.
\end{remark}

Under Assumption \ref{GraphAssumption}, we have the following posterior consistency result.

\begin{thm} \label{ThmPostConst}
Fix $0 < \alpha <0.1$ and $K \in \mathbb{N}$. Let the sequences $\{J_{n, N}\}$, $\{G_{n, N}^{(k(n))}\}$ and $\{\ell_{N}\}$ satisfy Assumption \ref{GraphAssumption}. Let $\epsilon_{N} = \ell_{N}^{-0.5 + \alpha}$. Let $(x_{1}^{(N)}, \dots, x_{N}^{(N)}) \stackrel{i.i.d.}{\sim} N_{I}(0, 1)$ and let $\{ \delta_{nn'}^{(N)}\}_{1 \leq n < n' \leq N}$ be sampled from model \eqref{eq:GMDSmodel}. Finally, let $(\Tilde{x}_{1}^{(N)},\dots, \Tilde{x}_{N}^{(N)}) \sim p(\cdot | \{\delta_{nn'}^{(N)}\}_{n' \in J_{n, N}})$ be sampled from the associated marginal posterior distribution of the model. Then there exists $C > 0$ so that the event 
\begin{align}
    \{ \forall n \in [N], \, |x_{n}^{(N)} - \Tilde{x}_{n}^{(N)}| < C \epsilon_{N}\}
\end{align}
occurs asymptotically almost surely.
\end{thm}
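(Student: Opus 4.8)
The plan is to establish a uniform concentration argument in two stages: first showing that, conditional on the sampled latent locations, the posterior concentrates around a set of ``pairwise-distance-consistent'' configurations, and then leveraging the connectivity in Assumption \ref{GraphAssumption} to upgrade pairwise control into simultaneous control of every coordinate. I would begin by recording the standard facts about the truncated-normal model: for fixed $M$ and $\sigma^2$ bounded away from $0$ and $\infty$, the log-likelihood \eqref{eq:Sll} restricted to the kept indices is, up to additive constants, a sum over $n$ and $n' \in J_{n,N}$ of terms comparable to $-(\delta_{nn'} - \delta_{nn'}^*)^2/(2\sigma^2)$, with the $\log\Phi$ correction being smooth and bounded on the relevant range. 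Hoeffding/sub-Gaussian bounds then give that, with probability $1 - o(N^{-2})$ uniformly over the $\binom{N}{2}$ pairs, $|\delta_{nn'} - \delta_{nn'}^*|$ is $O(\log N)$-controlled; a union bound over all pairs (and a discretization of the nuisance parameters $(I,\sigma^2,M)$, which live in a fixed compact set) makes this simultaneous. The key quantitative input is that each block $G_{n,N}^{(k(n))}$ has size at least $\ell_N$, so averaging the likelihood contributions over such a block suppresses noise at rate $\ell_N^{-1/2}$, which is exactly the source of the exponent in $\epsilon_N = \ell_N^{-0.5+\alpha}$; the slack $\alpha$ and the requirement $\ell_N/\log(N)^2 \to \infty$ are what let the per-pair deviation $O(\log N / \sqrt{\ell_N})$ be absorbed into $\epsilon_N$ after the union bound.

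Next I would set up the identifiability/anchoring step. Since the model only sees distances $|x_n - x_{n'}|$, the posterior is invariant under a global sign flip and translation; one resolves this exactly as in \cite{ohraftery2001bmds} by working modulo these symmetries (or, in the one-dimensional $I$-interval setting here, by conditioning on orientation), so ``$|x_n^{(N)} - \tilde x_n^{(N)}| < C\epsilon_N$'' should be read after the optimal such alignment. With the alignment fixed, I would show that any configuration $\tilde{\mathbf x}$ in the bulk of the posterior must satisfy $|\,|\tilde x_n - \tilde x_{n'}| - \delta_{nn'}^*\,| < C'\epsilon_N$ for every edge $(n,n')$ of the graph in Assumption \ref{GraphAssumption}: on the shared index set $J_{n,N} \cap J_{n',N} \setminus (G_{n,N}^{(k(n))} \cup G_{n',N}^{(k(n'))})$, which has at least $\ell_N$ elements, the likelihood forces both $\tilde x_n$ and $\tilde x_{n'}$ to lie near the true $x_n, x_{n'}$ relative to those $\ell_N$ common neighbors, and combining the constraints pins down $|\tilde x_n - \tilde x_{n'}|$. (The role of the $K$ partitions is to guarantee this stays true no matter which block a given object's index happens to fall in; it is essentially a covering device ensuring the ``$\geq \ell_N$'' lower bound survives removal of any one block.)

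The final step is to propagate from edges to vertices: fix an anchor vertex, and since the graph is connected with, say, diameter $d_N \le N$, walk along a path to any target vertex $n$, accumulating the per-edge error $C'\epsilon_N$. Naively this gives an $N\epsilon_N$ bound, which is not good enough, so the real work here is to argue the errors do not compound additively along the path — e.g. by showing directly that each $\tilde x_n$ is within $C\epsilon_N$ of $x_n$ using a \emph{single} well-populated index block that both share with the anchor (which Assumption \ref{GraphAssumption}, part 2, supplies for adjacent vertices and which one extends globally using connectivity plus the fact that blocks $G_N^{(k)}$ can be taken independent of $n,n'$, as in Remark \ref{LAssumpt}). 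Concretely, once one vertex is anchored to $O(\epsilon_N)$ of its truth, every neighbor sharing $\ell_N$ common reference points with it is also anchored to $O(\epsilon_N)$, and connectivity spreads this to all of $[N]$ with a \emph{fixed} constant $C$ rather than a growing one. The main obstacle, and the step I expect to require the most care, is precisely this non-accumulation of error across the connectivity graph: one must show the constant $C$ can be chosen uniformly in $N$ despite paths of unbounded length, which is why the argument has to go through a common reference block rather than a telescoping sum of edge errors, and why the $\log(N)^2$ growth condition on $\ell_N$ (rather than merely $\ell_N \to \infty$) is needed to kill the union bound over all $O(N^2)$ pairs and all $N$ vertices simultaneously.
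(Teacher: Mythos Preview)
Your proposal has a genuine structural gap: the propagation step cannot close as written. You correctly identify that telescoping edge errors along a path of length $\Theta(N)$ would give $O(N\epsilon_N)$, and you propose to avoid this by anchoring through a ``common reference block.'' But Assumption~\ref{GraphAssumption} only guarantees a large shared index set for \emph{adjacent} pairs in the graph; in the B-sBMDS setting (Remark~\ref{BAssumpt}) the graph is a path and vertices $1$ and $N$ share no indices at all. So your fix works at best for L-sBMDS and does not cover the theorem as stated. More fundamentally, learning the real-valued quantity $|\tilde x_n - \tilde x_{n'}|$ on edges and then walking the graph will always face accumulation unless you have an absolute anchor for each vertex.

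The paper sidesteps this entirely by a different decomposition. The key observation you are missing is that, in the one-dimensional model with prior $N_I(0,1)$, the data $\{\delta_{ny}\}_{y \in G_{n,N}^{(k(n))}}$ \emph{with the $x_y$ marginalized out} are i.i.d.\ from a mixture whose law depends only on $|x_n|$. So $|x_n|$ is learned \emph{directly} for every $n$ from its own block of size $\ge \ell_N$, via a posterior concentration result (Theorem~1 of \cite{PostConcWong95}), with no reference to neighbors and hence no accumulation. What remains is only the binary sign of each $x_n$, and the shared-index condition in part~2 of Assumption~\ref{GraphAssumption} is used to learn \emph{relative} signs $\mathrm{sign}(x_n)\cdot\mathrm{sign}(x_{n'})$ on edges; since these are discrete, propagating them along the connected graph costs nothing. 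Vertices with $|x_n| \le C\epsilon_N$ need no sign at all. A smaller point: the posterior here is not translation-invariant (the prior $N_I(0,1)$ breaks that), so the only symmetry to quotient by is the global sign flip.
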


\begin{proof}
    See  Appendix \ref{sec:Appendix A}.
\end{proof}

Theorem \ref{ThmPostConst} proves that we can achieve posterior consistency for latent locations estimated from a subsampled dissimilarity matrix as we are able to recover the estimated latent locations $\Tilde{x}_{n}^{(N)}$ up to an additive error of $\mathcal{O} (\epsilon_N)$ relative to the true latent locations $x_{n}^{(N)}$. We acknowledge that the biggest limitation of this proof is the assumption that we have one-dimensional latent objects. See Section \ref{SecHigherDimTheory} for a short discussion of how similar results may be obtained in fixed dimensions greater than 1. 

\subsection{Bayesian computation} \label{sec:MCMCInf}

Bayesian hierarchical models under the BMDS framework have previously been fit using MCMC algorithms such as Metropolis-Hastings (MH) \citep{Metropolis, MH1997, ohraftery2001bmds, bedford2014} and HMC \citep{neal2012, holbrook2021bigbmds}. In the following, we experiment with MH and HMC to perform posterior inference with the sBMDS models.  

Let $\theta$ be the random variable of interest and $\pi(\theta)$ the target distribution. Under MH, a new candidate $\theta^*$ is sampled from a proposal distribution centered at the value of the current iteration $s$, $q(\theta^* | \theta^{(s)})$. One then accepts the candidate with probability 
\begin{align}
    \alpha(\theta^* | \theta^{(s)}) = \min \biggr[ 1, \frac{\pi(\theta^*)q(\theta^{(s)} | \theta^*)}{\pi(\theta^{(s)})q(\theta^* | \theta^{(s)})} \biggr]. \label{eq:MHAR}
\end{align} 
In the BMDS model \eqref{eq:MDSmodel}, the parameters of interest are the latent locations $\latentData$ and the error variance $\mdsVariance$, and--within a larger Metropolis-within-Gibbs scheme--the target distributions of interest are their respective conditional posterior distributions.

For our MH-based experiments, we jointly draw each candidate object's latent location $\latentdata_n^*$ from the normal proposal distribution, $N(\latentdata_n^{(s)}, \tau^2)$, in which the proposal standard deviation $\tau$ is a tuning parameter. In practice, we find it beneficial to adjust $\tau$ in a manner that satisfies the diminishing adaptations criterion of \cite{ronsenthal2001}. Specifically, the acceptance ratio is the number of acceptances in a given sample bound. If the acceptance ratio exceeds the target acceptance ratio, we multiplicatively increase $\tau$ by $(1 + \min(0.01, 1/ \sqrt{s - 1}))$; otherwise we multiplicatively decrease $\tau$ by $(1 - \min(0.01, 1/ \sqrt{s - 1}))$. 

For BMDS and its sparse variants, the dimension of the state space grows with the number of objects. Because MH typically breaks down in high-dimensions, we also consider HMC to infer the latent locations. HMC allows one to generate a Markov chain with distant proposals that nonetheless have a high probability of acceptance. It combines a fictitious momentum variable, $\mb{P}$, along with a position variable to create a Hamiltonian system from which we compute the trajectories necessary for state space exploration. The position variable represents the parameters of the target distribution, so in the context of our model, we let the position variable be the latent locations $\latentData$. The Hamiltonian function is 
\begin{align}
    H(\latentData, \mb{P}) = U(\latentData) + K(\mb{P}) \label{eq:HMCfunc}
\end{align} where $U(\latentData)$ is the potential energy defined as the negative log target density, and $K(\mb{P})$ is the kinetic energy defined as $K(\mb{P}) = \text{tr}(\mb{P}^T \mb{P}) / 2$. The partial derivatives of the Hamiltonian dictate how $\mb{P}$ and $\latentData$ change over time $t$:
\begin{align}
    \frac{d\latentData}{dt} = \frac{\partial H(\latentData, \mb{P})}{\partial \mb{P}} = \mb{P}, \quad \frac{d\mb{P}}{dt} = \frac{-\partial H(\latentData, \mb{P})}{\partial \mb{\latentData}} = -\nabla_{\latentData} \ell(\distanceMatrix, \sigma^2). \label{eq:HMCpd}
\end{align} For computer implementation, these equations are discretized over time using some small stepsize $\epsilon$. We follow \cite{neal2012} and implement the leapfrog method to numerically integrate Hamilton's equations (\ref{eq:HMCpd}). We tune the stepsize in the same way we change the proposal standard deviation in the adaptive MH algorithm. To propose a new state, we sample an initial momentum variable $\mb{P}_0$ and numerically integrate Hamilton's equations with initial state, $(\latentData^{(s)}, \mb{P}_0)$. We then accept the proposed state, $(\latentData^*, \mb{P}^*)$, according to the Metropolis-Hastings-Green \citep{green1995, geyer2011} probability of
\small
\begin{equation}
        \min \biggr[ 1, \exp(-H(\latentData^*, \mb{P}^*) + H(\latentData^{(s)}, \mb{P_0})) \biggr] = \min \biggr[ 1, \exp(-U(\latentData^*) + U(\latentData^{(s)}) - K(\mb{P}^*) + K(\mb{P_0}) \biggr]. \label{eq:HMCAR}
\end{equation}
\normalsize

Measured on an iteration by iteration basis, HMC allows for faster exploration of state spaces, especially in higher dimensions, compared to MH \citep{neal2012, beskos2013}. However, HMC is computationally more expensive because it requires the gradient of the target function within every iteration of the leapfrog method. Recall that these gradient evaluations scale $\mathcal{O}(N^2)$ for BMDS. If we want to learn the BMDS error variance $\mdsVariance$ as well, we again follow the adaptive MH algorithm, drawing a candidate $\sigma^{2*}$ from a truncated normal proposal distribution with the current iteration's $\sigma^{2(s)}$ as the mean and a standard deviation with the same adaption scheme as described above. We account for the asymmetric proposal distribution within the MH acceptance probability (\ref{eq:MHAR}).

\section{Results} \label{sec:Results}

We explore the accuracy of full and sparse BMDS as well as the computational efficiency of all models in the context of the MH and HMC algorithms. The code for this project is available on Github (\url{https://github.com/andrewjholbrook/sparseBMDS}). For visualization, we use the \texttt{ggplot2} \citep{ggplot2} package in \texttt{R} \citep{Rcite}. 

\subsection{Simulation studies}

For a full Bayesian analysis, we put a D-dimensional multivariate normal distribution with mean $\mb{0}$ and diagonal covariance matrix $\mb{\Lambda}$ as the prior for $\latentdata_n$, independently for $n = 1,...,N$. The prior for the BMDS error variance $\mdsVariance$ is an inverse gamma with rate $a$ and shape $b$. One can define hyperpriors for $\mb{\Lambda}, a, b$, but we assume those parameters are fixed and known in this section. For our simulations, we set $\mb{\Lambda}$ equal to the identity $\mb{I}_2, a = 1$ and $b = 1$ so that $\latentdata_n \sim N(0, \mb{I}_2)$ and $\mdsVariance \sim IG(1,1)$. To create the observed dissimilarity matrix $\distanceMatrix = \{\delta_{nn'}\}$, we add i.i.d. noise using a truncated normal distribution with mean 0 and variance $\sigma^2_{true}$ to a ``true" distance matrix. For the ``true" distance matrix, $\distanceMatrix^{(true)} = \{\delta_{nn'}^{(true)}\}$, we generate a $N \times 2$ ``true" location matrix $\latentData$ from standard normal distributions and use $\latentData$ to calculate the Euclidean distance between pairs $(n, n')$.

\subsubsection{Accuracy} \label{sec:accuracy}

We test the accuracy of the sBMDS models by comparing the simulated ``true" dissimilarities to those obtained from HMC using the sBMDS posteriors and gradients. Given $S$ iterations, we calculate the mean of the mean squared error ($\overline{\mbox{MSE}}$) as $\overline{\mbox{MSE}} = \frac{1}{Sm} \sum_{s=1}^S \sum_{n \ne n'} (\delta_{nn'}^{*(s)} - \delta_{nn'}^{(true)})^2$ where $\delta_{nn'}^{*(s)}$ is the Euclidean distance calculated from the inferred locations of object $n$ and object $n'$ at iteration $s$, $\delta_{nn'}^{(true)}$ is the ``true" Euclidean distance, and $m = N(N-1)/2$ is the number of dissimilarities. We compare distances instead of locations because the locations are not identifiable under distance preserving transformations. For computational convenience, when the number of objects is greater than 1,000, we randomly sample 1,000 distances to calculate $\overline{\mbox{MSE}}$. We set $\sigma_{true}$ to either $0.1, 0.2, 0.3$ or $0.4$ to change noise levels and run 110,000 iterations, discarding the first 10,000 as burn-in and retaining every 100th iteration. We establish the initial conditions of the latent locations within HMC from classical MDS output. 

\begin{figure}
	\centering
	\includegraphics[width= 0.9 \textwidth]{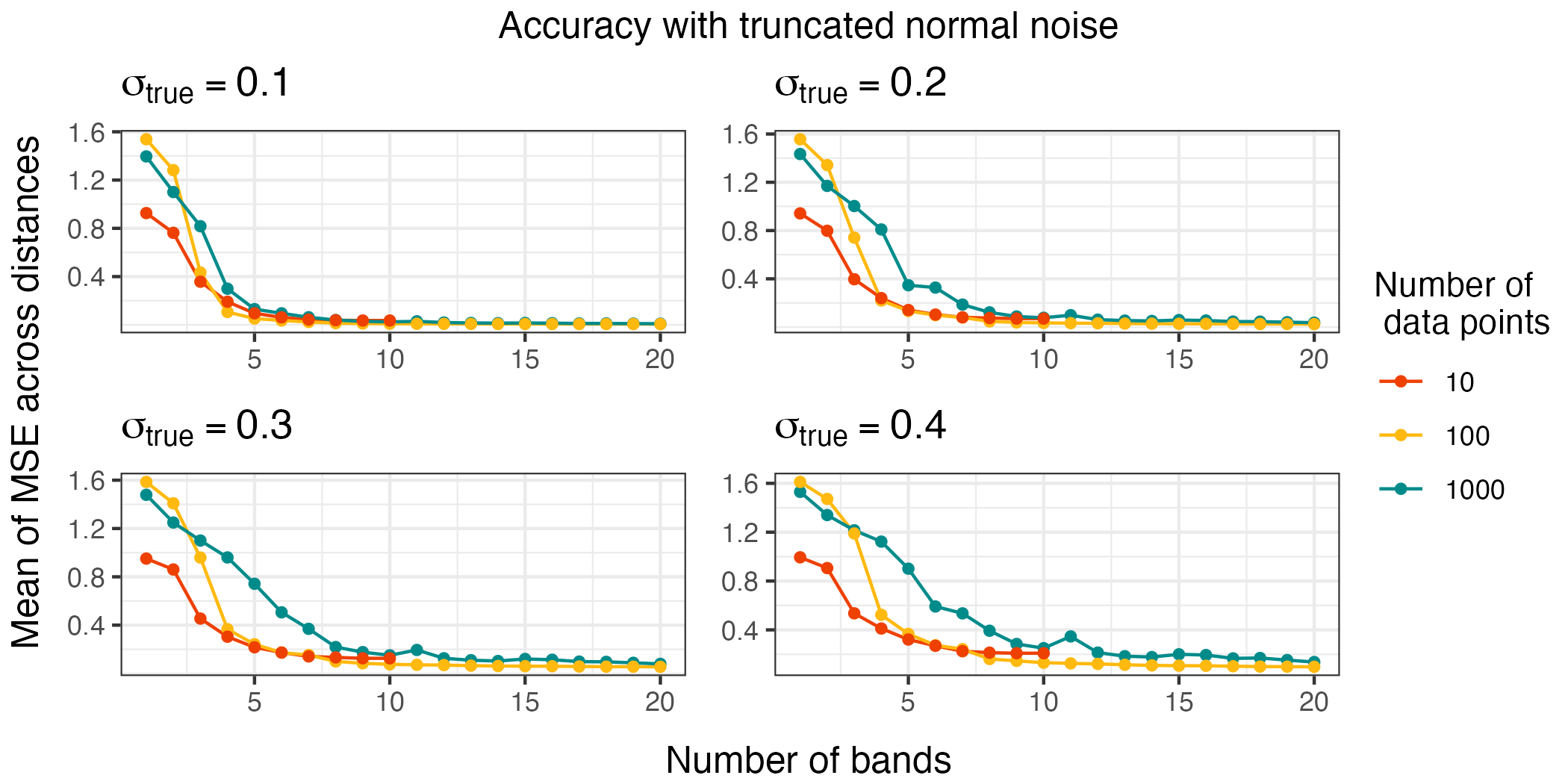}
	\caption{The mean of the mean squared error (MSE) across all distances using $1$ to $10$ bands for $10$ data points and $1$ to $20$ bands for $100$ and $1{,}000$ data points. We estimate Euclidean distances from the inferred locations obtained using an adaptive Hamiltonian Monte Carlo algorithm under banded sparse Bayesian multidimensional scaling (B-sBMDS). $\sigma^2_{true}$ is the variance component of the truncated normal noise centered at $0$ added to the ``true" distance matrix such that $\sigma_{true}$ corresponds to the BMDS error standard deviation $\sigma$.}
 \label{fig:mse_BAND}
\end{figure}

\begin{figure}
	\centering
	\includegraphics[width= 0.9 \textwidth]{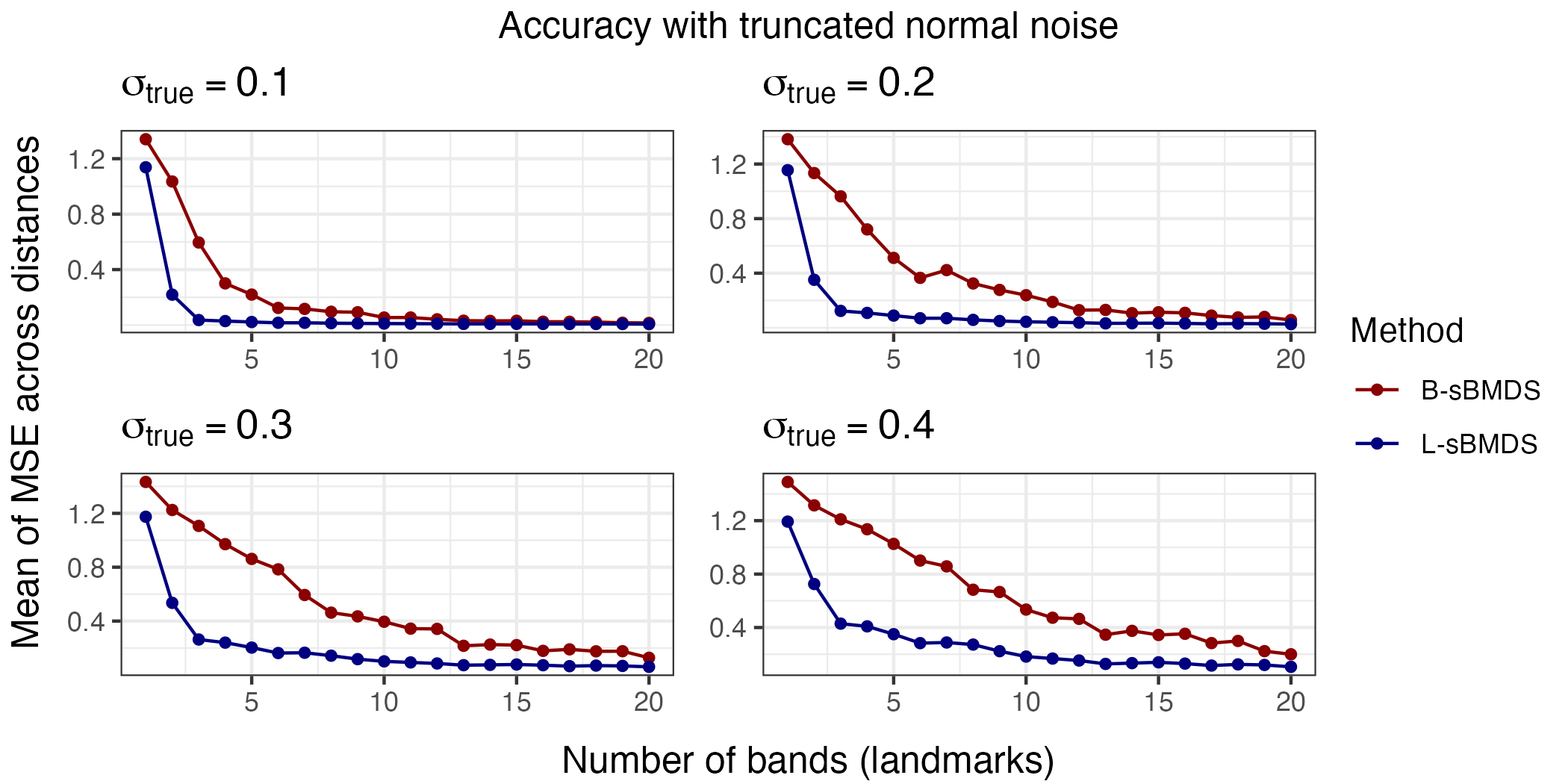}
	\caption{The mean of the mean squared error (MSE) across $1{,}000$ distances, randomly sampled from distance matrices with $10{,}000$ data points. We estimate Euclidean distances from the inferred locations obtained using an adaptive Hamiltonian Monte Carlo algorithm under both sparse Bayesian multidimensional scaling (sBMDS) variants, banded sBMDS (B-sBMDS) and landmark sBMDS (L-sBMDS) with $1$ to $20$ bands/landmarks. $\sigma^2_{true}$ is the variance component of the truncated normal noise centered at $0$ added to the ``true" distance matrix such that $\sigma_{true}$ corresponds to the BMDS error standard deviation $\sigma$.}
 \label{fig:mse_n10000}
\end{figure}

\begin{figure}
	\centering
	\includegraphics[width= 0.9 \textwidth]{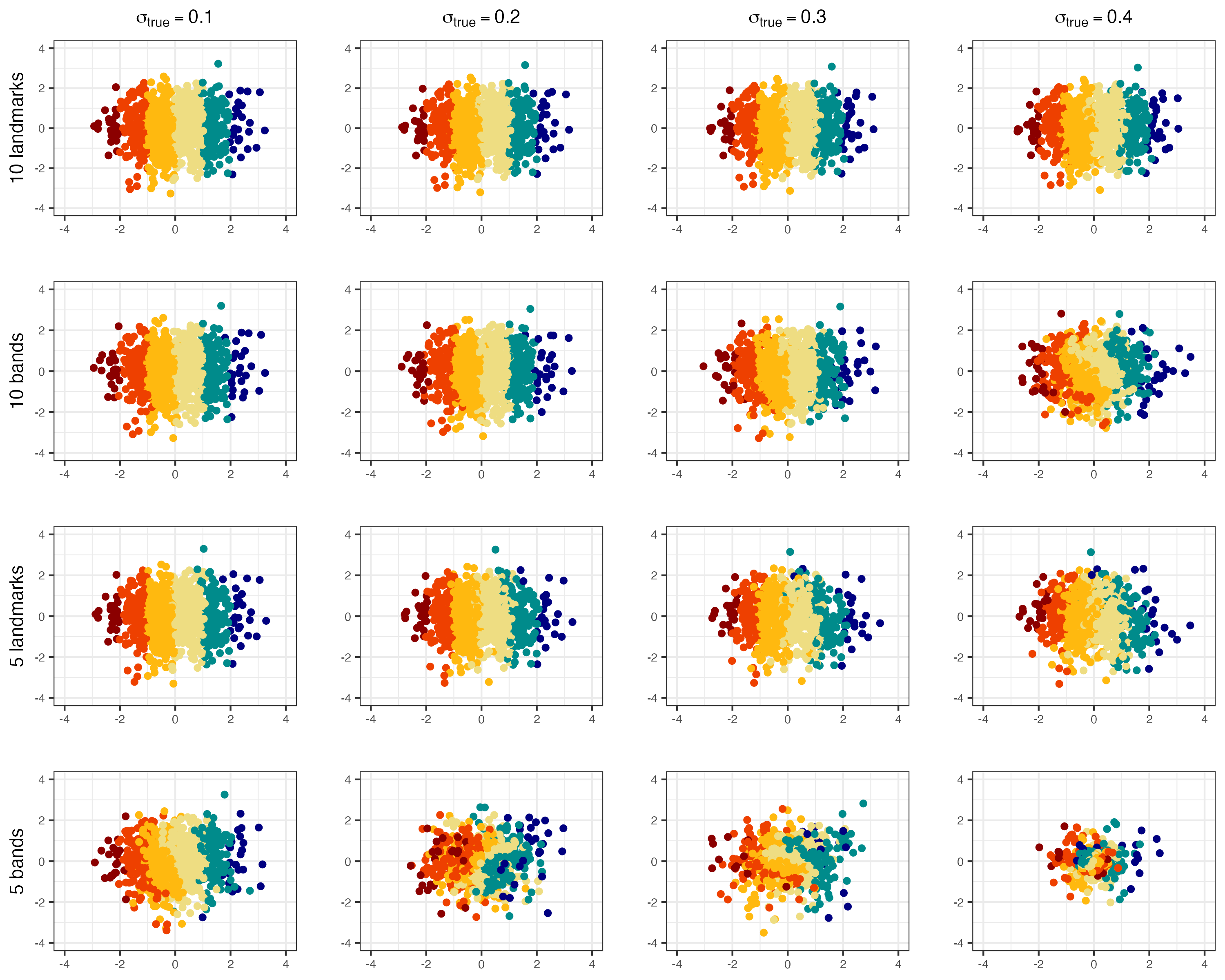}
	\caption{Procrustes aligned means of the inferred locations across $100{,}000$ iterations under the B-sBMDS and L-sBMDS frameworks when the number of bands/landmarks is ten and five. The number of data points is 1,000. We simulate the latent locations from a two-dimensional standard normal distribution and assign a color according to their x-coordinate. $\sigma^2_{true}$ is the variance component of the truncated normal noise centered at $0$ added to the ``true" distance matrix such that $\sigma_{true}$ corresponds to the BMDS error standard deviation $\sigma$.}
 \label{fig:gaussian.error}
\end{figure}

Figure \ref{fig:mse_BAND} plots $\overline{\mbox{MSE}}$ as function of the number of bands for data with 10, 100 and 1,000 data points at varying levels of noise (see Appendix \ref{sec:Appendix B}, Figure \ref{fig:mse_LM} for landmark results). Likewise, Figure \ref{fig:mse_n10000} plots $\overline{\mbox{MSE}}$ as function of the number of bands/landmarks for data with 10,000 data points under B-sBMDS and L-sBMDS at different noise levels. In both figures, all the plots have identifiable elbows, demonstrating that a small number of bands/landmarks is sufficient to achieve low error. While we need more bands for noisier data, the amount is still modest compared to the number of objects. Interestingly, we detect an elbow earlier for L-sBMDS than B-sBMDS; L-sBMDS recovers accurate pairwise relationships more efficiently than B-sBMDS. We visually see this difference in Figure \ref{fig:gaussian.error}. In this simulation, we generate 1,000 data points using the same sampling scheme and color-code the x-axis of the ``true" locations. After running 110,000 HMC samples, we plot the mean of the inferred latent locations from B-sBMDS and L-sBMDS using 5 and 10 bands/landmarks. From Figure \ref{fig:gaussian.error}, we observe that while L-sBMDS maintains the integrity of the latent locations, B-sBMDS rapidly loses its accuracy as noise increases for 10 bands and is no longer accurate for 5 bands.

\subsubsection{Sensitivity to model misspecification}

To observe how the sparse variants behave under model misspecifications, we explore two possible situations: 1) a mismatch between the true dimensionality and that specified by the scientist and 2) heavy-tailed, rather than truncated normal, noise. For case 1, we vary the dimension of the ``true" location matrix from 2 to 10 while fixing the embedding dimension to 2 and the truncated Gaussian noise variance $\sigma^2_{true}$ to 0.2. As expected, $\overline{\mbox{MSE}}$ decreases as the true underlying dimensionality approaches the embedding dimension. Full BMDS and B-sBMDS with 20 bands are more robust to dimension misspecification than classical MDS, and the accuracy for B-sBMDS with 20 bands closely matches that of full BMDS (Figure \ref{fig:HD.error}).

For case 2, we assume a correctly-specified-dimensional Euclidean space, but add i.i.d log-normal noise to the ``true" distance matrix. We bootstrap the $\overline{\mbox{MSE}}$ across all distances from 100 data points, a 100 times and plot the mean of $\overline{\mbox{MSE}}$ along with error bars representing $\pm$ the standard deviation of $\overline{\mbox{MSE}}$. Figure \ref{fig:LT.error} demonstrates that, even with heavy-tailed data, both sparse variants achieve comparable $\overline{\mbox{MSE}}$s to full BMDS's at a low number of bands/landmarks. In both cases, we observe that B-sBMDS seems to be less sensitive to model misspecification than L-sBMDS.

\begin{figure}
	\centering
	\includegraphics[width= 0.9 \textwidth]{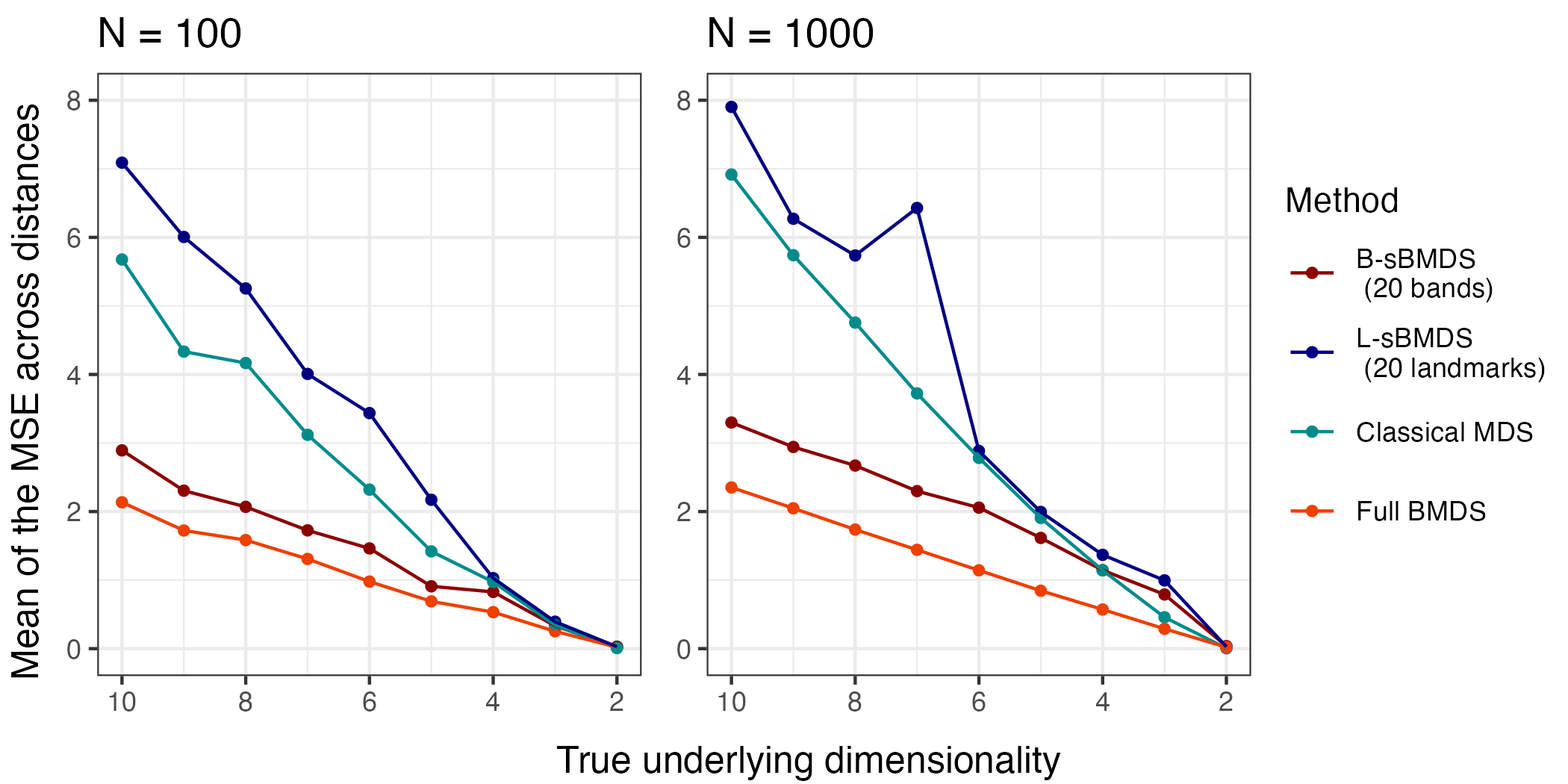}
	\caption{The mean of the mean squared error (MSE) across all distances using $20$ bands/landmarks for 100 and 1,000 data points. We vary the dimension space of the ``true" latent locations while fixing the latent dimensionality to two. We estimate Euclidean distances from the inferred locations obtained using an adaptive Hamiltonian Monte Carlo algorithm under the B-sBMDS, L-sBMDS and full BMDS frameworks. Additionally, we compare the mean MSE across all distances from the inferred locations using classical MDS.}
 \label{fig:HD.error}
\end{figure}

\begin{figure}
	\centering
	\includegraphics[width= 0.9 \textwidth]{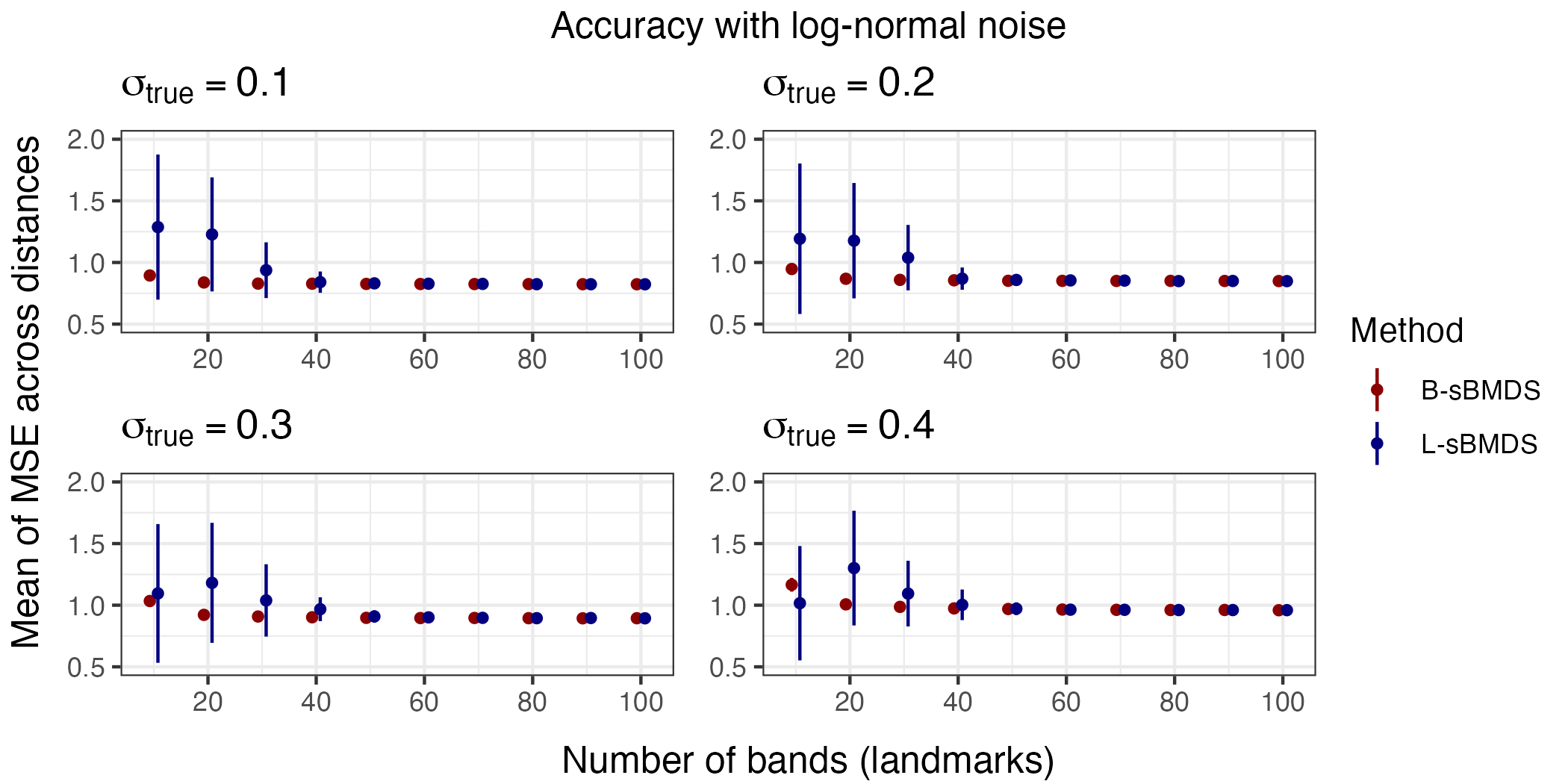}
	\caption{The average mean of the mean squared error (MSE) across all distances from 100 data points evaluated at intervals of 10 bands/landmarks, from 10 to 100, repeated 100 times. The dot is the average mean of MSE, and the error bars are $\pm$ one standard deviation away from this mean. We estimate Euclidean distances from the inferred locations obtained using an adaptive Hamiltonian Monte Carlo algorithm under both sparse Bayesian multidimensional scaling (sBMDS) variants, banded sBMDS (B-sBMDS) and landmark sBMDS (L-sBMDS). When the number of bands (landmarks) equals 100, we return to full BMDS. $\sigma^2_{true}$ is the variance component of the log-normal noise centered at $0$ added to the ``true" distance matrix such that the distribution of the observed distance matrix has heavy-tails.}
 \label{fig:LT.error}
\end{figure}

\subsubsection{Computational performance} \label{sec:compperf}

To better understand the computational benefits of the sBMDS variants, we first calculate the log-likelihood and log-likelihood gradient using B-sBMDS and L-sBMDS for a 10,000 by 10,000 Euclidean distance matrix. Recall that the number of couplings decreases per additional band/landmark. As a result, we see a parabolic-like relationship between evaluation time (in seconds) and the number of bands/landmarks (Figure \ref{fig:rawllgradn10000}). If we were to plot the number of couplings vs seconds per evaluation, we would observe linear associations instead. When the number of bands/landmarks is 10,000, we return to the full case. We observe likelihood (gradient) speedups of 457-fold (773-fold), 91-fold (71-fold), 7-fold (10-fold) and 1.3-fold (1.3-fold) for 5, 50, 500 and 5000 bands (landmarks); there appears to be negligible time differences between B-sBMDS and L-sBMDS. Figure \ref{fig:sBMDS.rawtime.band} emphasizes this correspondence between speedups and number of bands, demonstrating the performance gains using a small number of bands relative to the number of objects. We observe approximately 3-fold, 10-fold and 40-fold speedups when applying the sBMDS likelihoods and gradients to 500, 1,000 and 5,000 data points with 50 bands/landmarks. We only scale up to 50 bands because these are reasonable band counts to achieve high accuracy (Figure \ref{fig:mse_BAND} and \ref{fig:mse_n10000}). We see similar patterns for landmarks in Figure \ref{fig:sBMDS.rawtime.lm} (Appendix \ref{sec:Appendix B}). 

\begin{figure}
	\centering
	\includegraphics[width= 0.9 \textwidth]{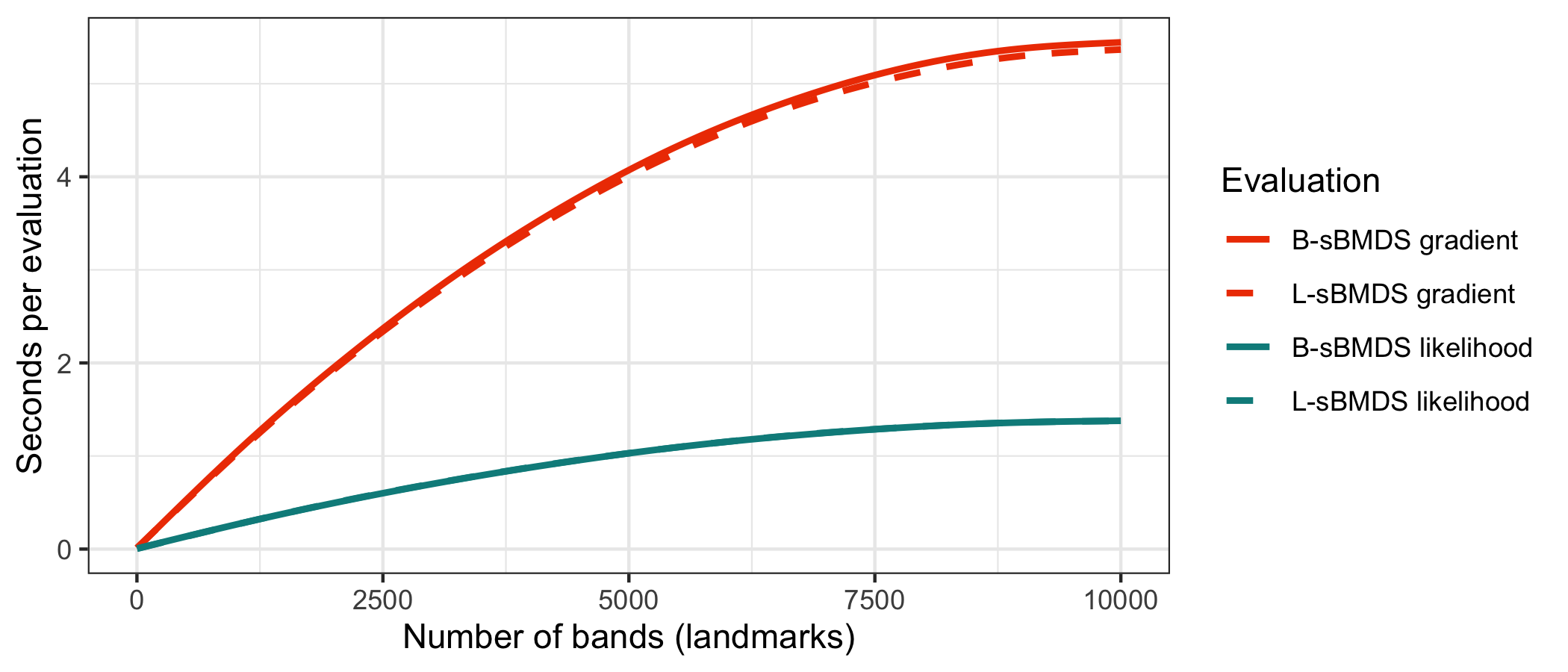}
	\caption{Time elapsed to calculate the sparse BMDS (B-sBMDS and L-sBMDS) likelihoods (cyan) and gradients (red) as a function of the number of bands/landmarks when the number of data points is $10{,}000$. The seconds per evaluation at $10{,}000$ bands/landmarks correspond to the time it takes to calculate the full BMDS likelihoods and gradients. The parabolic curve is due to the number of couplings decreasing per additional band/landmark, causing the differences in computational time to reduce as well. If we plot the number of couplings vs seconds per evaluation, we would observe strictly linear associations.}
 \label{fig:rawllgradn10000}
\end{figure}

\begin{figure}
	\centering
	\includegraphics[width= 0.9 \textwidth]{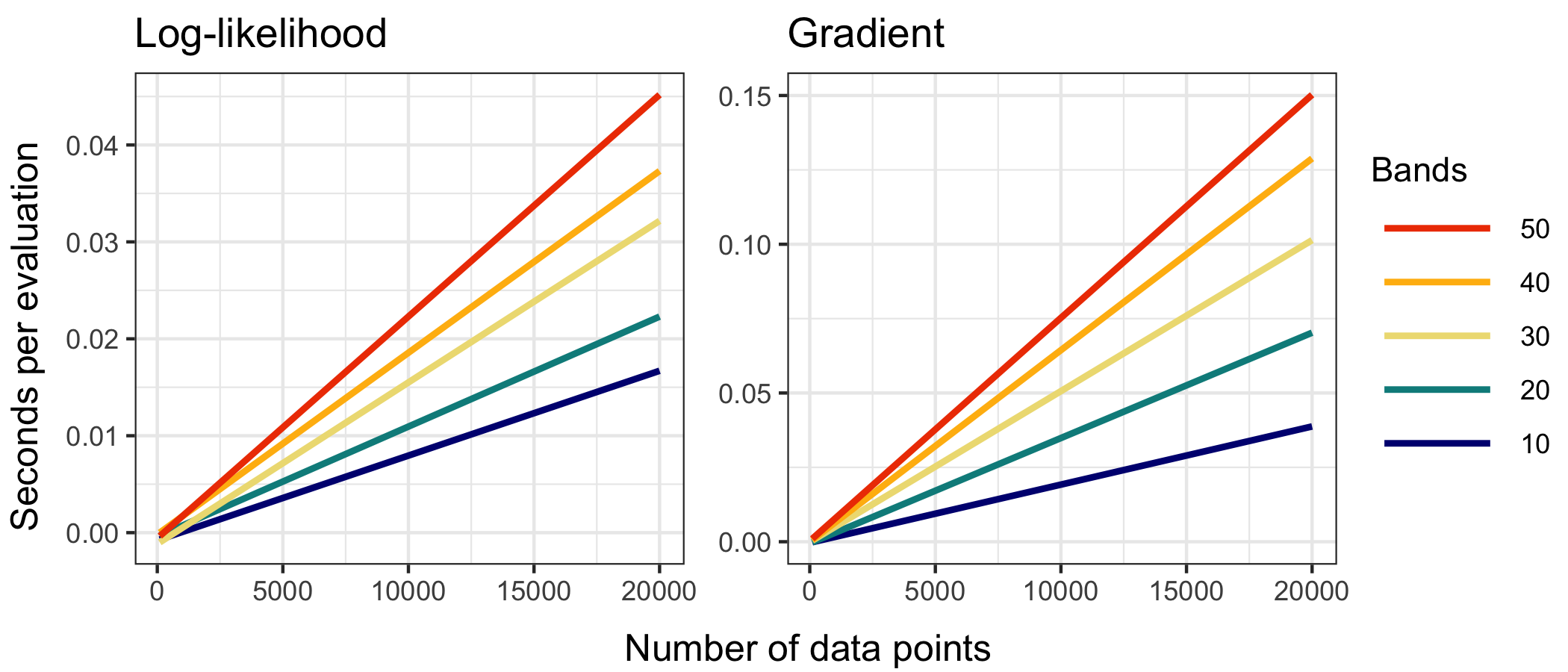}
	\caption{Time elapsed to calculate the banded sparse Bayesian multidimensional scaling (B-sBMDS) likelihood and gradient using $B$ bands as a function of the number of data points.}
 \label{fig:sBMDS.rawtime.band}
\end{figure}

To compare computational performances, we set $\sigma_{true} = 0.2$, a value that will allow us to establish accurate results while obtaining high acceptance probabilities. We fix the number of bands/landmarks to 10 based on the findings from both Figure \ref{fig:mse_BAND} and \ref{fig:mse_LM}, which confirm that this number ensures high model accuracy when $\sigma_{true} = 0.2$ and $N < 1{,}000$. We then conduct MH and HMC under the full BMDS, B-sBMDS, and L-sBMDS models. For a fair comparison, we run all chains until the minimum effective sample size (ESS) is at least 100. ESS is a function of asymptotic auto-correlation, $\text{ESS} = \frac{S}{1 + 2\sum_{t=1}^{\infty} \rho_t}$, where $\rho_t$ is the autocorrelation between samples separated by a lag of $t$ timesteps and $S$ is the length of a time series input. We calculate ESS using the \texttt{coda} package \citep{coda} in \texttt{R}. We define efficiency as the minimum ESS per hour and take the natural log of it to allow comparison across scales. Figure \ref{fig:compperf.miness} compares efficiency across the three models and two MCMC algorithms. The sBMDS variants under HMC outperform the others even in moderately high dimensions. MH begins to break down as the number of data points increases because, while it is computationally faster than HMC, the large dimension of the state space prevents efficient exploration, leading to high auto-correlation and low ESS values. 

\begin{figure}
	\centering
	\includegraphics[width= 0.9 \textwidth]{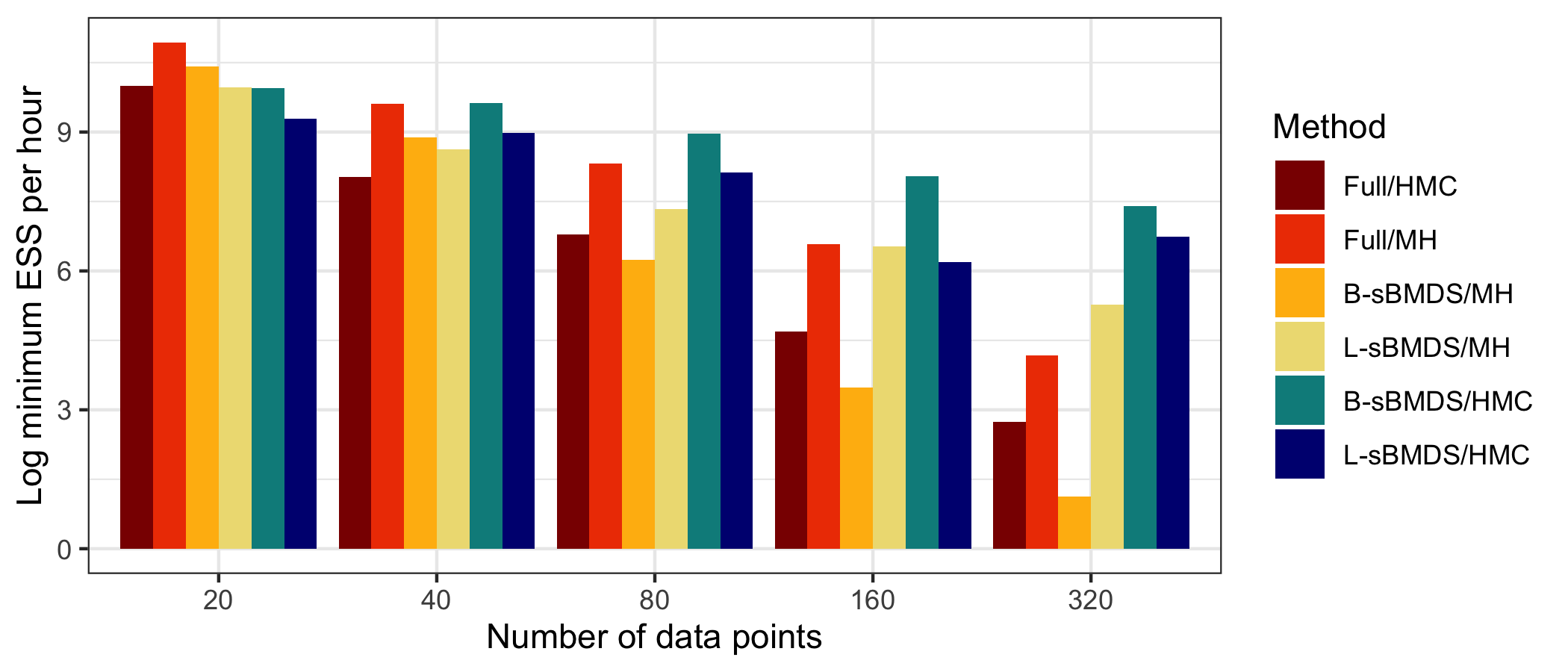}
	\caption{Computational performance measured as the logarithm of the minimum effective sample size (ESS) per hour across different frameworks and number of data point. In the legend, the half before the slash corresponds to the model type. ``Full" is the BMDS model; ``B-sBMDS" (banded sBMDS) and ``L-sBMDS" (landmark sBMDS) are the sparse models using 10 bands/landmarks. The latter half explains the MCMC algorithm used for posterior inference. HMC abbreviates for Hamiltonian Monte Carlo and MH for Metropolis-Hastings.}
 \label{fig:compperf.miness}
\end{figure}

\subsection{Analysis of global influenza} \label{sec:fluapp}
When incorporated into a larger Bayesian hierarchical model, sBMDS provides dimension reduction that propagates and accounts for total model uncertainty. To demonstrate the robustness and applicability of our sparse frameworks, we integrate them within a Bayesian hierarchical model for analyzing the global spread of influenza. Every year seasonal influenza affects millions of adults, resulting in about 140,000 to 710,000 influenza‐related hospitalizations in the United States alone \citep{flu2018}. The virus's ability to constantly evolve makes understanding its viral patterns so important for managing prevalence. The use of easily accessible mobility data can improve the readiness in which we learn about viral epidemics. \citet{holbrook2021bigbmds} apply the BMDS framework to a phylogeographic analysis of the spread of influenza subtypes through transportation networks. They analyze 1,370, 1,389, 1,393 and 1,240 samples of type H1N1, H3N1, Victoria (VIC) and Yamagata (YAM), spanning 12.9, 14.2, 15.4 and 17.75 years, respectively. To scale BMDS to data of this size, they implement core model likelihood and log-likelihood gradient calculations on large graphics processing units and multi-core central processing units. Unfortunately, such an approach requires time-intensive coding and access to expensive computational hardware. We employ a similar Bayesian hierarchical model, applying the same highly structured stochastic process priors but use sBMDS to transform to a latent network space. We are interested in whether under sBMDS we can accurately and efficiently infer the subtype-specific rates of dispersal across the latent airspace for the four influenza strains. 

Our data consists of pairwise ``effective distances" \citep{brockhelbing2013} between countries, which inversely measures the probability of traveling between airports. More trafficked airports have a shorter ``distance" and thus a higher chance of disease transmission. Effective distances are better at predicting disease arrival times and spread compared to geographical distances because they incorporate the underlying mobility network \citep{brockhelbing2013}. For each influenza subtype, we apply sBMDS to their air traffic data with the following priors for the unknown parameters, $\latentData_v, \sigma_v^2$, and hyperparameter $\traitVariance_v$. For strain $v$, the prior on the viral latent locations $\latentData_v$ follows a multivariate Brownian diffusion process along the tree
\begin{align}
    \latentData_v \sim MN(\boldsymbol{\mu}_v, \vec{V}_{\tree_v}, \traitVariance_v),
\end{align}
in which $\boldsymbol{\mu}_v$ is the $N \times D$ mean matrix, $\vec{V}_{\tree_v}$ is the $N \times N$ row covariance matrix calculated from a fixed tree $\tree_v$, and $\traitVariance_v$ is the $D \times D$ column covariance matrix, independently for $v = 1,...,4$. For viral diffusion, $\traitVariance_v$ describes how the virus's location in geographic space covary over lineages. In addition, we assume a priori
\begin{align}
    \traitVariance^{-1}_v \sim Wishart(d_0, \mb{T}_0) \\
    \sigma^{-2}_v \sim Gamma(1, 1).
\end{align}
$d_0$ is the degree of freedom set as the dimension of the latent space and $\mb{T}_0$ is the rate matrix fixed as $\mb{I}_D$ in our model. The trace of $\traitVariance_v$ provides the instantaneous rate of diffusion and is of chief scientific interest. One can think of spatial variance as how much the virus diffuses in a geographic dimension, so by summing up the variance in each dimension, we can understand the total spread of a virus across space in a given moment. We want to accurately infer the trace of $\traitVariance_v$ with our phylogenetic sBMDS model trained on a latent airspace. We implement the adaptive HMC algorithm to recover the viral latent locations along with adaptive MH updates on the BMDS precision parameter, $1 / \sigma^2_v$, and Gibbs updates on $\traitVariance^{-1}_v$. We let the latent dimension be six as \cite{holbrook2021bigbmds} recommended from 5-fold cross-validation. We find 20 leapfrog steps to be adequate as we vary the number of bands/landmarks to 50, 100 and 200.

\subsubsection{Accuracy}
For each subtype and model, we run 120,000 iterations, burning the first 20,000 and saving every 100th iteration. Figure \ref{fig:fluSED} plots the posterior distributions of the strain-specific diffusion rates inferred from the full (left) and banded sparse (right) model. We successfully capture the relative distributions for the B-sBMDS using 50 bands, but note that the posterior modes are slightly off. When we increase the number of bands to 200 (Figure \ref{fig:fluSEDextra}), the distributions appear identical. For Figure \ref{fig:fluMDS}, we apply sBMDS with 50 bands on the H1N1 air traffic data and use procrustes to align the inferred latent locations for each country across all iterations. Since our data has multiple taxon IDs per country, we take the median of the procrustes aligned means of the inferred latent locations and plot the first two dimensions. Figure \ref{fig:fluMDS} demonstrates that we obtain a reasonable map; countries in the same continent group together, and within continents, countries with more air traffic are more centrally located. Using the \texttt{textmineR} package \citep{helliger} in \texttt{R}, we compute the Hellinger distance between the strain-specific posterior distributions of the squared effective distance per year from the full and sparse methods (Table \ref{tab:fluAcc}). As expected, the Hellinger distance decreases with more bands. 

\begin{figure}
	\centering
	\includegraphics[width= 0.9 \textwidth]{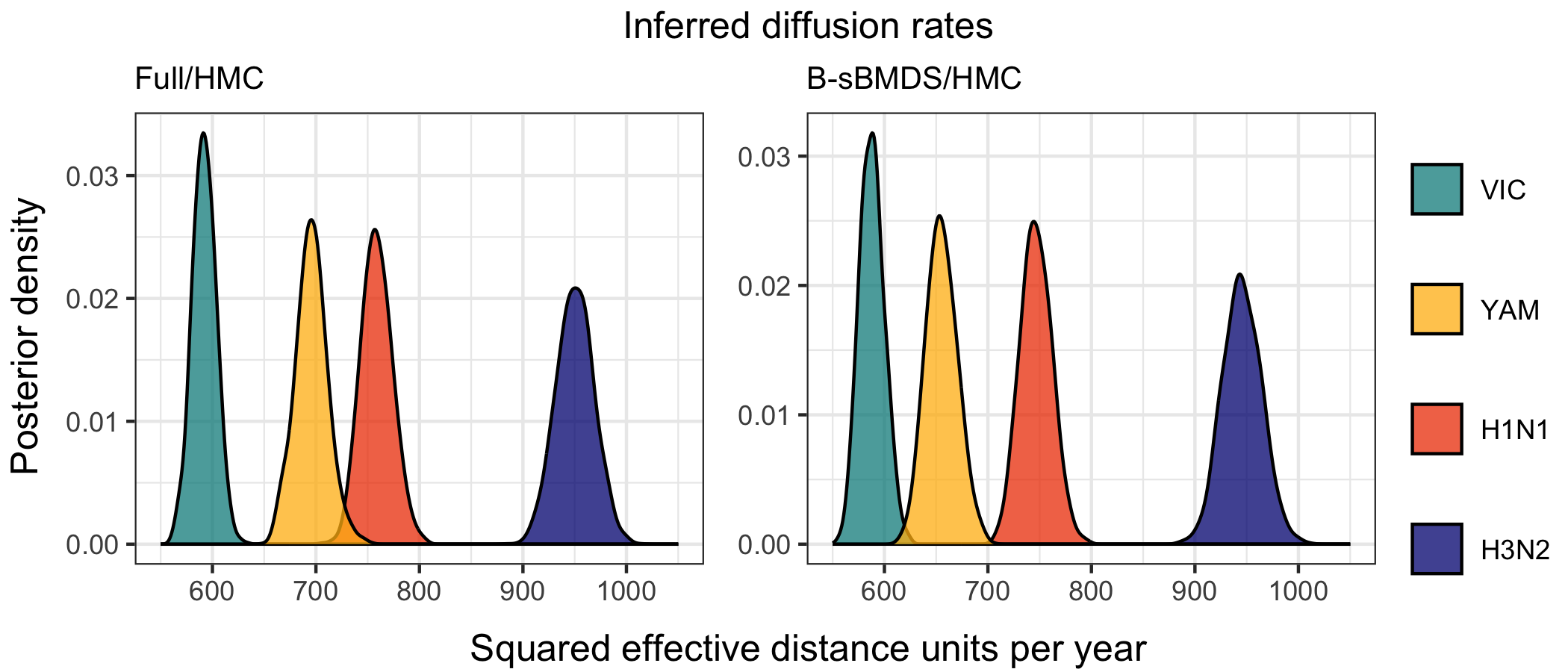}  
        \caption{Posterior distribution of strain-specific diffusion rates inferred from 6-dimensional Bayesian phylogenetic multidimensional scaling with effective world-wide air traffic space distances for data. Full/HMC refers to the use of the full likelihood and gradient whereas B-sBMDS/HMC uses 50 bands to compute the sparse banded likelihood and gradient for inference within the Hamiltonian Monte Carlo algorithm.}
        \label{fig:fluSED}
\end{figure}

\begin{figure}
	\centering
	\includegraphics[width= 0.9 \textwidth]{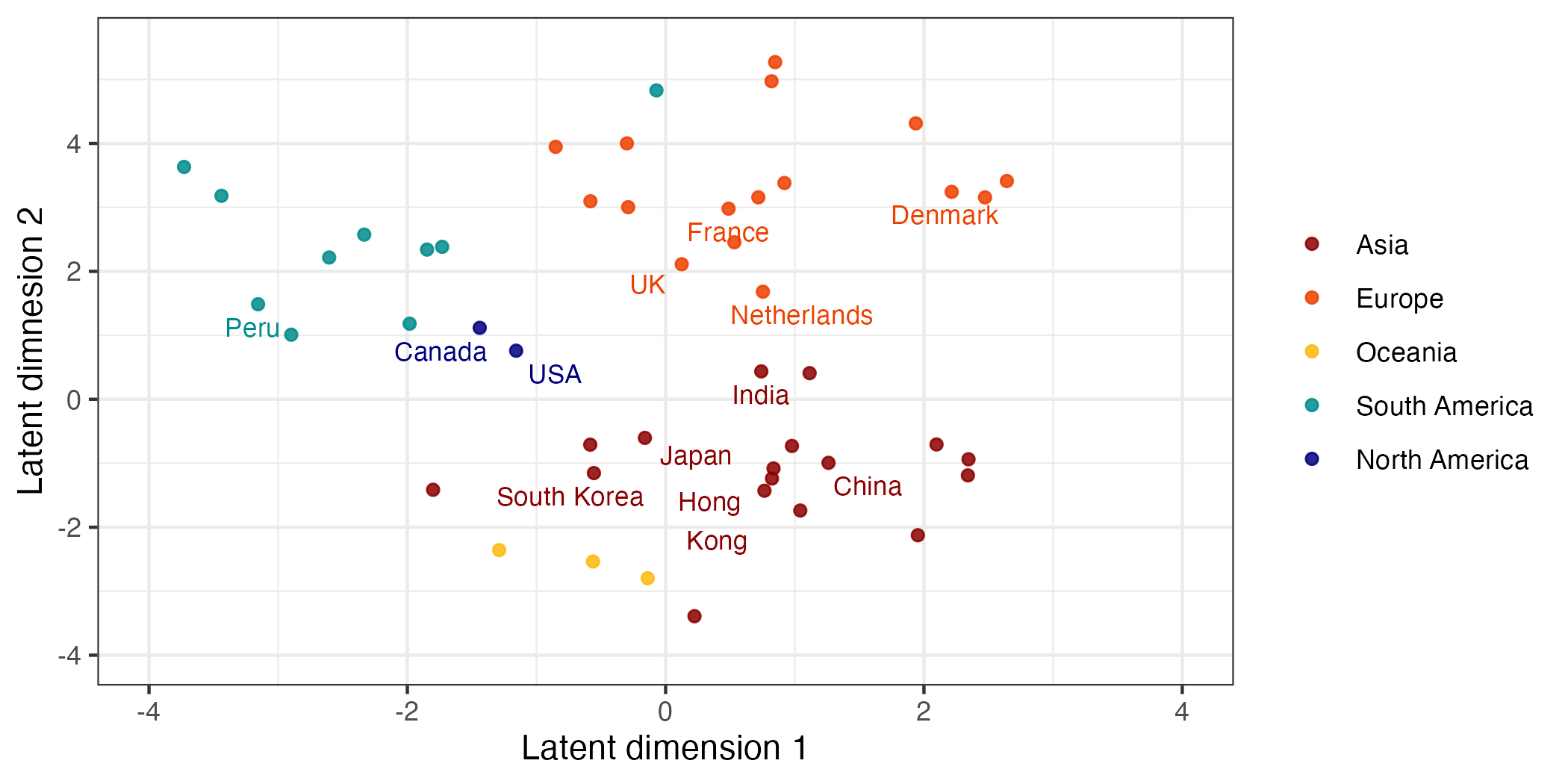}  
        \caption{The first two dimensions of the inferred latent locations for each country using a 6-dimensional sparse Bayesian phylogenetic multidimensional scaling with 50 bands. The plotted inferred latent locations are the median of the Procrustes aligned means across all iterations and taxon.}
        \label{fig:fluMDS}
\end{figure}

\subsubsection{Computational performance}
We measure efficiency speedups across the four influenza subtypes as the ratio of ESS per hour between the full and sparse versions. From Table \ref{tab:fluAcc}, we generally observe that B-sBMDS is more efficient than L-sBMDS, which matches our previous findings (Figure \ref{fig:compperf.miness}). The efficiency speedup decreases with more bands, but is still three times faster for a more than sufficient band count of 200.

\begin{table} [H]
\begin{center}
	\begin{tabular}{c | c | c }
		\multicolumn{3}{c}{B-sBMDS} \\ 
		 B & Hellinger distance & Average efficiency speedup (min, max)\\
		\hline
		50 & 0.024 &  5.99 (5.58, 6.52)\\ 
		100 & 0.021 & 4.06 (3.99, 4.14)\\  
		200 & 0.019 & 2.81 (2.76, 2.86)\\
	\end{tabular}
\quad
	\begin{tabular}{c | c | c }
		\multicolumn{3}{c}{L-sBMDS} \\ 
            L & Hellinger distance & Average efficiency speedup (min, max)\\
		\hline
		50 & 0.024 &  5.22 (4.35, 5.63)\\ 
		100 & 0.023 & 3.83 (3.69, 3.90)\\  
		200 & 0.022 & 2.97 (2.55, 3.52)\\
	\end{tabular}
\end{center}
\caption{We compare the strain-specific posterior distributions of the inferred diffusion rates from the full and sparse BMDS methods. We calculate Hellinger distance between the posterior densities obtained using sparse Bayesian multidimensional scaling (sBMDS) and BMDS. Efficiency speedup is the ratio of effective sample size per hour between the full and sparse BMDS versions. We take the average efficiency speedup across the four influenza subtypes.}
\label{tab:fluAcc}
\end{table}

\subsection{Cluster analysis of ArXiv articles} \label{sec:ERapp}

We explore the utility of sBMDS for propagating uncertainty in downstream tasks by applying a cluster analysis to a collection of ArXiv paper abstracts. Using the \texttt{arxivscraper} package \citep{arXivscrape} in \texttt{Python}, we scrape articles posted on ArXiv from December 2017 to March 2024 across four subject areas: mathematical logic (math.LO), applied physics (physics.app-ph), machine learning (stat.ML) and economics (q-fin.ec). Our final dataset includes 9,308 articles with 1,411 related to math, 1,838 to physics, 5,361 to statistics and 698 to economics. We extract each paper's abstract and embed it into a 768-dimensional numerical vector using \texttt{SentenceTransformers} \citep{sentence-trans} under the \texttt{all-mpnet-base-v2} model \citep{song2020mpnet} in \texttt{Python}. This large language model produces sentence-level embeddings that capture semantic similarity; therefore, abstracts with similar content yield similar embeddings. To form an observed dissimilarity matrix, we compute the pairwise cosine dissimilarities between embeddings, e.g, $1 - \cos{(\latentdata_{n}, \latentdata_{n'})} = 1 - \frac{\latentdata_{n} \cdot \latentdata_{n'}}{||\latentdata_{n}|| ||\latentdata_{n'}||}$ for $\latentdata_{n}, \latentdata_{n'} \in \RR^D$. Cosine dissimilarity is appropriate for textual data, as it emphasizes directional similarity over magnitude.

We apply B-sBMDS to this 9,308 by 9,308 observed distance matrix using 50, 100, 500, 1,000 and 2,000 bands and obtain the posterior distribution of the latent locations over a 2-dimensional space. We use an adaptive HMC algorithm with 20 leapfrog steps to recover the latent locations. We keep every 100th iteration and run enough iterations such that the ESS approximates the number of thinned samples, indicating ``near-independence" among samples. To incorporate the uncertainty encoded in the sBMDS posterior samples, we implement a ``bagged estimator"-style algorithm. We randomly draw $S$ iterations from the joint posterior distribution of the latent locations and implement hierarchical density-based spatial clustering of applications with noise (HDBSCAN) for each iteration using the \texttt{dbscan} package \citep{dbscan} in \texttt{R}. DBSCAN \citep{dbscan1996} is a non-parametric clustering algorithm that groups points into dense regions based on a user-defined radius parameter $\epsilon$ and a minimum number of neighbors. Points within a dense region are assigned to the same cluster, while points in low-density regions are labeled as noise. HDBSCAN \citep{hdbscan2015} performs DBSCAN for various $\epsilon$ values and integrates the results to give the most stable output.

\begin{figure}
    \centering
    \includegraphics[width = 0.9 \textwidth]{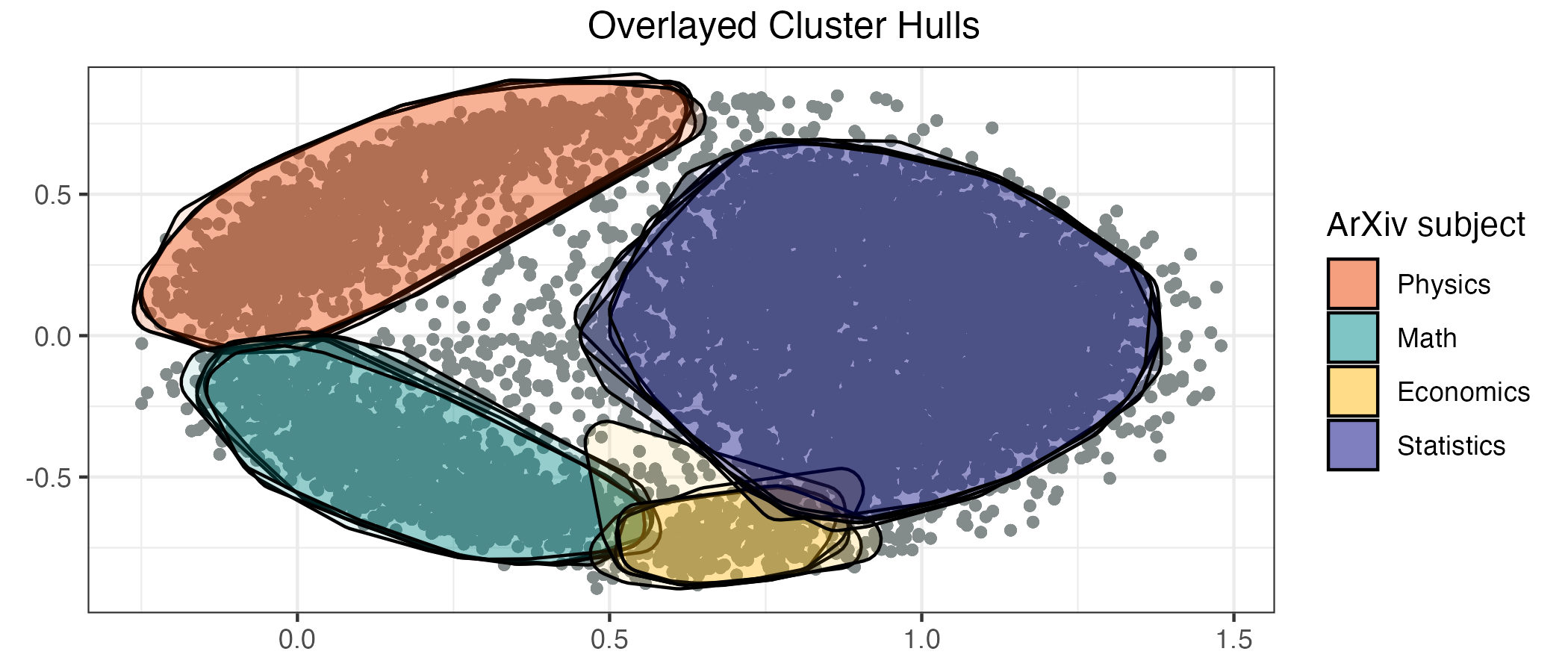}
    \caption{Convex cluster hulls for five posterior samples of the inferred latent locations using hierarchical density-based spatial clustering (HDBSCAN). The inferred locations are estimated from a 2-dimensional B-sBMDS framework using 500 bands and an adaptive Hamiltonian Monte Carlo (HMC) algorithm. The convex hulls are plotted on the posterior mean of the inferred locations across 1,000 iterations, and points outside a hull are considered noise. The ArXiv subject of a cluster is assumed by the most frequent ArXiv category among its members.}
    \label{fig:ER_HP}
\end{figure}

Figure \ref{fig:ER_HP} visualizes the results of HDBSCAN applied to five posterior samples of the latent space plotted on the posterior mean of the latent locations. We set the minimum number of neighbors to 40. Table \ref{tab:ER_accuracy} presents clustering accuracy across band settings. Each cluster is assigned a subject label based on the most frequent ArXiv category among its members. Classification error is the proportion of articles whose ArXiv subject label does not match the subject label of their assigned cluster. We calculate the mean classification error across clusters and note that 500 bands provide a good balance between accuracy and computational cost.

\begin{table}
    \centering
    \scalebox{0.9}{
    \begin{tabular}{c|c|c|c}
         Band & Mode number of clusters & Mean number of noise points & Mean classification error (95\% CI) \\
         \hline
         50 & 2 & 3975 & 0.226 (0.111, 0.315)\\
         100 & 3 & 2954 & 0.094 (0.038, 0.171)\\
         500 & 4 & 1380 & 0.051 (0.041, 0.061)\\
         1000 & 4 & 1248 & 0.038 (0.031, 0.045)\\
         2000 & 4 & 1009 & 0.034 (0.031, 0.037)\\
    \end{tabular}}
    \caption{Accuracy as the number of bands increases using hierarchical density-based spatial clustering (HDBSCAN) on 200 randomly sampled inferred latent locations. The inferred locations are estimated from a 2-dimensional B-sBMDS framework and an adaptive Hamiltonian Monte Carlo (HMC) algorithm. Note that the number of clusters should be 4. HDBSCAN is a non-parametric clustering algorithm and therefore the number of clusters is predicted. Classification error is the proportion of articles whose ArXiv subject label does not match the majority subject label of their assigned cluster, and CI is the credible interval.}
    \label{tab:ER_accuracy}
\end{table}

We obtain clustering labels for each article indirectly induced by sBMDS's uncertainty. Across the $S$ iterations, we compute a co-clustering matrix that estimates the posterior pairwise probability that two articles belong in the same cluster. This matrix can be used for subsequent analyses such as consensus clustering. The combination of sBMDS with HDBSCAN provides an example of how to propagate uncertainty from a sparse Bayesian embedding model to post-hoc inference.  

\section{Discussion} \label{sec:discussion}

We present two methods for subsetting the observed dissimilarity data: banded sparse BMDS (B-sBMDS) and landmark sparse BMDS (L-sBMDS). We show that both sparse methods obtain accurate results at low band/landmark counts even with noisy data. Moreover, combining HMC with sBMDS proves effective in inferring thousands of latent locations. We successfully integrate the sBMDS variants within a Bayesian hierarchical model and demonstrate that one may propagate uncertainty represented by the sBMDS posterior to downstream modeling tasks. In well-specified cases, we recommend using L-sBMDS, as it achieves higher accuracy with fewer landmarks, leading to greater speedups compared to B-sBMDS. However, in most cases, B-sBMDS is the better choice due to its robustness to model misspecification. For example, in our influenza application, in which the Euclidean assumption is violated, B-sBMDS yields a slightly smaller Hellinger distance than L-sBMDS.

Possible extensions to our work include the use of different noise distributions on the observed dissimilarities. For example, \cite{metricBMDS} employ Bayesian metric MDS, assuming the observed dissimilarities come from log-normal distributions. As these distributions still have $\mathcal{O}(N^2)$ time complexity, the sBMDS could be valuable in improving the computational performance for a wider range of dissimilarity data.

%Within the realm of BMDS, we wish to further explore an alternative methodology for likelihood approximation. Instead of subsetting the observed distance matrix to calculate the BMDS likelihood, one may ``clump" the data such that the maximum distance between objects $n$ and $n'$ in the same cluster is less than or equal to an error $\epsilon$. One may then calculate the likelihood using moments generated from and between $k$ clusters. Such a strategy could prove computationally cheap but may require careful balancing of $\epsilon$ and $k$. 
Additionally, many potential theoretical developments remain. For example, it appears that one needs approximately $D \sqrt{N}$ bands where $N$ is the number of objects and $D$ is the embedding space. However, we have no formal proof, only experimental results (Sections \ref{sec:fluapp} and \ref{sec:ERapp}, Figure \ref{fig:band_sens}). Determining the number of desired bands/landmarks is difficult due to its data-dependence. We explain in Section \ref{SecHigherDimTheory} how one could extend Theorem \ref{ThmPostConst}'s proof of posterior consistency to higher dimensions. The biggest limitations are extending Lemma \ref{LemmaGaussianObs} and obtaining estimates with good dependence on dimension $D$. One could also explore treating the coupling matrix $J_{n, N}$ as a random variable that depends on the observed data (and perhaps changes over the run-time of an algorithm). An appealing feature of \cite{casecntrl2012} is that they claim reasonable uncertainty quantification along a truly linear run-time. It seems difficult to formalize such a result with posterior consistency for our sBMDS models as the number of bands (landmarks) grows with the number of objects. We are left with many tantalizing questions: ``by including a data-informed approach to model sparsity, can we achieve a linear run-time and still demonstrate posterior consistency?", ``how should we be measuring consistency?", and ``do the datasets \cite{casecntrl2012} study have any special features that change the rate of convergence for a sBMDS-like model?" 

Lastly, we are interested in further extensions within phylogeography. \cite{holbrook2021bigbmds} and \cite{li2023} select the dimension of the latent diffusion process using cross-validation, which is computationally demanding. Therefore, we want to incorporate a shrinkage prior within the Bayesian phylogenetic MDS framework that penalizes the eigenvalues of the diffusion rate matrix. As long as implementing such a prior does not slow down mixing, this approach may help one learn the latent locations in a faster, more unified manner.  

% 3. -- theory stops short , single point consistency vs all points consistent 

\appendix 
\section{Proof of Theorem \ref{ThmPostConst}}\label{sec:Appendix A}

Throughout this section, we fix notation as in the statement of Theorem \ref{ThmPostConst}.

\subsection{Consistent Estimates of Absolute Values}

We note that $|x_{n}|$ (but not $x_{n}$ itself) is effectively identifiable given the data $\{\delta_{ny}\}_{y \in G_{n, N}^{(k(n))}}$, and we have the posterior concentration bound:

\begin{lemma}\label{LemmaPost}
Fix some $0 < \alpha < 0.1$ and a sequence $\epsilon_{N} = \ell_{N}^{-0.5 + \alpha}$. Then there exist  constants $c_{1},c_{2},c_{3} > 0$  so that for all $N$ sufficiently large and all $n \in [N]$, $k \in [K]$, we have:
\begin{align}
    \mathbb{P}[p(\{ u \, : \, \min(|u - x_{n}|, |u + x_{n}|) \leq c_{1} \epsilon_{N}\} | \{\delta_{ny}\}_{y \in G_{n, N}^{(k(n))}}) \geq 1 - e^{-c_{2} \ell_{N} \epsilon_{N}^{2}}] \geq 1 - e^{-c_{3} \ell_{N} \epsilon_{N}^{2}}.
\end{align}
\end{lemma}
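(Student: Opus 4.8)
The plan is to reduce the posterior concentration statement to a one-parameter Bayesian estimation problem, where the single parameter is $|x_n|$. Fix $n$ and $k$, and write $G = G_{n,N}^{(k(n))}$, so $|G| \geq \ell_N$. Conditional on the latent locations, the observations $\{\delta_{ny}\}_{y \in G}$ are independent truncated-normal variates with means $|x_n - x_y|$. The first step is to condition on the ``other'' coordinates $\{x_y\}_{y \in G}$ and on the auxiliary parameters $(I, \sigma^2, M)$; since these are drawn i.i.d.\ from $N_I(0,1)$, a Gaussian-concentration / empirical-process argument shows that with probability at least $1 - e^{-c \ell_N}$ the empirical distribution of $\{x_y\}_{y \in G}$ is close (in, say, Kolmogorov or Wasserstein distance) to $N_I(0,1)$, uniformly over $n,k$ by a union bound (here is where the hypothesis $\ell_N / \log(N)^2 \to \infty$ is used, to beat the $\log N$ from the $\binom{N}{2}$-size union bound and from $K$). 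On this good event, the log-likelihood as a function of a candidate value $u$ for $x_n$ is, up to lower-order fluctuations, $\sum_{y \in G} \log p_{\text{trunc}}(\delta_{ny} \mid |u - x_y|)$, whose population version is maximized exactly at $u = \pm x_n$ and is strongly concave in $|u|$ near those points (the truncated-normal family is regular, and $\mathbb{E}[\log p(\delta \mid \mu)]$ has a non-degenerate second derivative in $\mu$, bounded away from $0$ and $\infty$ on the compact prior support for $(\sigma^2, M, I)$).

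The second step is a standard Bayesian concentration argument à la Schwartz / Ghosal–van der Vaart, but in this simple parametric setting one can argue more directly. Split $\mathbb{R}$ (intersected with the prior support for $u$) into the ``good'' set $S = \{u : \min(|u-x_n|,|u+x_n|) \leq c_1 \epsilon_N\}$ and its complement. On $S^c$, the strong concavity gives a log-likelihood-ratio deficit of order $\ell_N \epsilon_N^2$ relative to the truth: $\ell_G(x_n) - \sup_{u \in S^c} \ell_G(u) \gtrsim \ell_N \epsilon_N^2$ on the good event, with the stochastic fluctuations of the log-likelihood ratio controlled (Bernstein / sub-exponential bounds for sums of $\ell_N$ i.i.d.\ terms) to be $o(\ell_N \epsilon_N^2)$ with probability $1 - e^{-c \ell_N \epsilon_N^2}$ — note $\ell_N \epsilon_N^2 = \ell_N^{2\alpha} \to \infty$, so these tail bounds are genuinely small and survive the union bound over $n, k$. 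Meanwhile the prior mass of a neighborhood of $\{x_n, -x_n\}$ of radius $\epsilon_N$ is at least $c \epsilon_N$ (the prior on $u$ has density bounded below on compact support), which only contributes a $\log(1/\epsilon_N) = O(\log \ell_N)$ term — again negligible against $\ell_N \epsilon_N^2$. Dividing the posterior mass of $S$ by that of $S^c$ then yields $p(S \mid \{\delta_{ny}\}_{y\in G}) \geq 1 - e^{-c_2 \ell_N \epsilon_N^2}$, establishing the inner event; the outer probability $1 - e^{-c_3 \ell_N \epsilon_N^2}$ collects the exceptional events from the empirical-measure approximation and the log-likelihood-ratio fluctuations.

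I would organize the write-up as: (i) a concentration lemma for the empirical distribution of the conditioning coordinates; (ii) deterministic bounds on the population log-likelihood (identifiability of $|x_n|$, lower bound on curvature, using compactness of the prior support for $\sigma^2, M, I$ and the fact that on $N_I(0,1)$ a positive fraction of the $x_y$ are separated from any fixed $u$); (iii) a sub-exponential tail bound for the empirical log-likelihood ratio; (iv) assembling these with the prior lower bound into the posterior ratio. The main obstacle I anticipate is step (ii) together with the $u \leftrightarrow -u$ sign ambiguity: one must verify that, on the good event for the empirical measure, the curvature lower bound holds \emph{uniformly} for all $n$ and is not destroyed near the symmetry point $u = 0$ (where $|u - x_n|$ and $|u + x_n|$ coincide) or near the endpoints of $I$ where the truncation of the prior on $x$ and the support constraints interact; handling the non-smoothness of $u \mapsto |u - x_y|$ at $u = x_y$ requires a little care but is harmless after integrating against the (smooth) likelihood in $\delta$. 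A secondary bookkeeping point is making sure every exceptional-event probability is of the form $e^{-c\ell_N\epsilon_N^2}$ (not merely $e^{-c\ell_N}$ times something), so that the final constants $c_2, c_3$ come out as claimed.
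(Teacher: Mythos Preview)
The paper takes a much shorter route. Its key observation is that, conditional on $x_n$ alone, the observations $\{\delta_{ny}\}_{y \in G}$ are already i.i.d.: each $x_y$ is an independent draw from $N_I(0,1)$, so integrating it out yields a one-parameter family $q_u(\delta) = \int p_{\text{trunc}}(\delta \mid |u-x|)\, N_I(0,1)(dx)$ (a mixture of truncated Gaussians), and the data are $|G| \geq \ell_N$ i.i.d.\ draws from $q_{x_n}$. The paper then invokes Theorem~1 of \cite{PostConcWong95} as a black box on the family $\mathcal{F} = \{q_u\}_{u \in I}$ to obtain a uniform likelihood-ratio bound outside an $\epsilon_N$-neighborhood of $\{x_n,-x_n\}$, and pairs it with the trivial estimate $\prod_y q_{x_n}(\delta_{ny})/q_u(\delta_{ny}) \leq 2$ for $u$ within $\ell_N^{-3}$ of $x_n$. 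That is the entire argument.

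Your proposal has a genuine gap at the point where you identify ``the log-likelihood as a function of $u$'' with $\sum_{y \in G} \log p_{\text{trunc}}(\delta_{ny} \mid |u - x_y|)$. That is the \emph{conditional} log-likelihood given the nuisance locations $\{x_y\}$, but the posterior $p(\,\cdot \mid \{\delta_{ny}\}_{y \in G})$ appearing in the lemma is for $x_n$ with the $\{x_y\}$ integrated out; the object that drives it is $\sum_y \log q_u(\delta_{ny})$. Your empirical-measure step (i) only constrains the \emph{true} $\{x_y\}$ under the outer probability $\mathbb{P}$; it says nothing about how the inner posterior averages over the \emph{unknown} $\{x_y\}$, and from this row of data alone the individual $x_y$'s are not recoverable, so you cannot simply freeze them at their true values. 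A concrete symptom: for generic fixed $\{x_y\}$, the population version of your conditional log-likelihood is maximized only at $u = x_n$, not at $u = -x_n$; the sign ambiguity in the lemma statement arises precisely from having marginalized over the symmetric prior on the $x_y$'s, so your step~(ii) as written is inconsistent with the expression you are analyzing. Your curvature and sub-exponential program is sound in spirit, but it must be run on the marginal family $\{q_u\}$---at which point the empirical-process step~(i) becomes unnecessary and you are effectively re-deriving the Wong--Shen bound by hand.
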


\begin{proof}
Given $x_{n}$, the data $\{\delta_{ny}\}_{y \in G_{n, N}^{(k(n))}}$ are i.i.d. with distributions being a finite mixture of truncated Gaussians. Denote the density of this distribution by $q_{x_{n}}$, and let $\mathcal{F} = \{q_{u}\}_{u \in I}$ be the associated family of possible distributions.

With $\epsilon_{N}$ as above and this choice of $\mathcal{F}$, for any fixed $0 < c < c_{\text{crit}}$ small enough and all $N > N_{0}$ large enough, the sequence $\{\epsilon_{N}\}$ satisfies Inequality (3.1) of \cite{PostConcWong95} for the collection of likelihoods $\mathcal{F}$. Applying Theorem 1 of \cite{PostConcWong95} (together with the well-known formula for Hellinger distances between Gaussians),  there exist constants $c_{1},c_{2},c_{3}$ so that for all $N$ sufficiently large,
\begin{align} \label{IneqMleConc}
    \mathbb{P}[\sup_{u \, : \, \min(|u - x_{n}|, \, |u + x_{n}|) > c_{1}\epsilon_{N}} \prod_{y \in G_{n, N}^{(k(n))}} \frac{q_{u}(\delta^*_{n, y})}{q_{x_{n}}(\delta^*_{n,y})} \geq e^{-c_{2} \ell_{N} \epsilon_{N}^{2}} ] \leq 4 e^{-c_{3} \ell_{N} \epsilon_{N}^{2}},
\end{align} 
where the outer probability is taken with respect to the distribution of the data  $\{\delta_{ny}\}_{y \in G_{n, N}^{(k(n))}}$ given $x_{n}$. On the other hand, for all $u$ satisfying $|u - x_{n}| < \frac{1}{\ell_{N}^{3}}$ and all $N$ sufficiently large, we have 
\begin{align} \label{IneqDenomTriv}
    \prod_{y \in G_{n, N}^{(k(n))}} \frac{q_{x_{n}}(\delta^*_{n,y})}{q_{u}(\delta^*_{n,y})} \leq 2. 
\end{align} 
Combining Inequalities \eqref{IneqMleConc} and \eqref{IneqDenomTriv} completes the proof (with possibly different values of $c_{1},c_{2},c_{3}$).
\end{proof}

\subsection{Consistent Estimates of Signs}

Fix $n, n' \in [N]$ and associated indices $k(n), k(n') \in [K]$. Fix $J = J_{n, N} \cap J_{n', N} \backslash (G_{n, N}^{(k(n))} \cup G_{n', N}^{(k(n'))})$ satisfying $|J| \geq \ell_{N}$. 

Let $\hat{\Tilde{x}}_{n}$ be the posterior median of the distribution of $|x_{n}|$ given $\{ \delta_{ny}\}_{y \in G_{n,N}^{(k(n))}}$, and similarly for $\hat{\Tilde{x}}_{n'}$. For $j \in J$, define the Bernoulli random variables $Z_{j} = \textbf{1}_{A_{j}}$, where $A_{j}$ is the event:
\begin{equation} \label{EqAjDef}
    A_{j}  = \{ \max(\delta_{nj}, \delta_{n'j}) > | \hat{\Tilde{x}}_{n} - \hat{\Tilde{x}}_{n'}| \}.
\end{equation}

Note that $\hat{\Tilde{x}}_{n}, \hat{\Tilde{x}}_{n'}$ are $\{\delta_{ny}\}_{y \in  G_{n,N}^{(k(n))}} \cup \{\delta_{n'y}\}_{y \in  G_{n',N}^{(k(n'))}}$-measurable, and $(\delta_{nj}, \delta_{n'j})$ are independent of $\{\delta_{ny}\}_{y \in  G_{n,N}^{(k(n))}} \cup \{\delta_{n'y}\}_{y \in  G_{n',N}^{(k(n'))}}$ for each $j \in J$, and finally the collection $\{(\delta_{nj}, \delta_{n'j})\}_{j \in J}$ are independent. Thus, conditional on $\{\delta_{ny}\}_{y \in  G_{n,N}^{(k(n))}} \cup \{\delta_{n'y}\}_{y \in  G_{n',N}^{(k(n'))}},$ the random variables $\{Z_{y}\}_{y \in J}$ are i.i.d. Denote by $r_{n,n'}$ their common parameter. By the same argument as in Lemma \ref{LemmaPost}, we have the posterior concentration bound: 

\begin{lemma}\label{LemmaPost2}
Fix notation $0 < \alpha < 0.1$, $\epsilon_{N} = \ell_{N}^{-0.5 + \alpha}$ and notation as above. Then there exist constants $c_{1},c_{2},c_{3} > 0$ so that, for all $N$ sufficiently large,
\begin{align}
    \mathbb{P}[p(\{ r \, : \, |r-r_{n,n'}| \leq c_{1} \epsilon_{N}\} | \{Z_{y}\}_{y \in J}) \geq 1 - e^{-c_{2} \ell_{N} \epsilon_{N}^{2}}] \geq 1 - e^{-c_{3} \ell_{N} \epsilon_{N}^{2}}.
\end{align}
\end{lemma}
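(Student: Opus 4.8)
The plan is to transcribe the proof of Lemma \ref{LemmaPost} almost verbatim, with the truncated-Gaussian-mixture family $\mathcal{F}$ replaced by the one-parameter Bernoulli family $\mathcal{F}' = \{\mathrm{Bernoulli}(r)\}_{r\in[0,1]}$. I work throughout conditionally on $\mathcal{D}_{G} := \{\delta_{ny}\}_{y\in G_{n,N}^{(k(n))}} \cup \{\delta_{n'y}\}_{y\in G_{n',N}^{(k(n'))}}$: as noted in the paragraph preceding the lemma, $\hat{\Tilde{x}}_{n}, \hat{\Tilde{x}}_{n'}$ and hence $r_{n,n'}$ are $\mathcal{D}_{G}$-measurable, and conditionally on $\mathcal{D}_{G}$ the variables $\{Z_{y}\}_{y\in J}$ are i.i.d. $\mathrm{Bernoulli}(r_{n,n'})$ with $|J|\geq\ell_{N}$. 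So everything reduces to a posterior-concentration bound for a Bernoulli parameter from $|J|\geq\ell_{N}$ i.i.d. draws, conditionally on $\mathcal{D}_{G}$ and then uniformly in $\mathcal{D}_{G}$.

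The first step is the metric-entropy input to \cite{PostConcWong95}. Since $\mathcal{F}'$ is a bounded one-dimensional parametric family, its $L_{1}$-bracketing entropy at scale $u$ is $\order{\log(1/u)}$, and because $\ell_{N}\epsilon_{N}^{2}=\ell_{N}^{2\alpha}$ grows polynomially in $\ell_{N}$ while $\log(1/\epsilon_{N})\asymp\log\ell_{N}$, the sequence $\epsilon_{N}=\ell_{N}^{-0.5+\alpha}$ satisfies inequality (3.1) of \cite{PostConcWong95} for $\mathcal{F}'$, with constants not depending on $(n,n')$. Using the explicit Hellinger distance $h^{2}(\mathrm{Bernoulli}(r),\mathrm{Bernoulli}(r'))=1-\sqrt{rr'}-\sqrt{(1-r)(1-r')}$, which dominates a fixed multiple of $(r-r')^{2}$ on compact subintervals of $(0,1)$ and a multiple of $(\sqrt{r}-\sqrt{r'})^{2}$ (or its reflection) near an endpoint, one gets $|r-r'|\lesssim h(\mathrm{Bernoulli}(r),\mathrm{Bernoulli}(r'))$ for bounded $|r-r'|$, which turns Wong--Shen's Hellinger neighborhoods into parameter neighborhoods. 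Theorem 1 of \cite{PostConcWong95}, applied to the $|J|\geq\ell_{N}$ i.i.d. observations $\{Z_{y}\}_{y\in J}$, then yields constants $c_{1},c_{2},c_{3}>0$ (uniform in $n,n'$) with
\begin{align*}
    \mathbb{P}\Bigl[\ \sup_{r\,:\,|r-r_{n,n'}|>c_{1}\epsilon_{N}}\ \prod_{y\in J}\frac{L_{r}(Z_{y})}{L_{r_{n,n'}}(Z_{y})}\ \geq\ e^{-c_{2}\ell_{N}\epsilon_{N}^{2}}\ \Bigm|\ \mathcal{D}_{G}\ \Bigr]\ \leq\ 4\,e^{-c_{3}\ell_{N}\epsilon_{N}^{2}}.
\end{align*}

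The second step lower-bounds the denominator exactly as in \eqref{IneqDenomTriv}: for $r$ within $\ell_{N}^{-3}$ of $r_{n,n'}$ (a one-sided window if $r_{n,n'}$ is within $\ell_{N}^{-3}$ of $0$ or $1$), each factor $L_{r_{n,n'}}(Z_{y})/L_{r}(Z_{y})$ is $1+\order{\ell_{N}^{-3}}$ over at most $N$ terms, so the product is $\leq2$ once $N\ell_{N}^{-3}\to0$; since the prior on $r$ has a density bounded below near $r_{n,n'}$, it assigns mass $\gtrsim\ell_{N}^{-3}$ to this window, so the marginal likelihood is $\gtrsim\ell_{N}^{-3}\prod_{y\in J}L_{r_{n,n'}}(Z_{y})$. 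Dividing the first-step numerator bound by this, and absorbing the polynomial factor $\ell_{N}^{-3}$ into $e^{-c_{2}\ell_{N}\epsilon_{N}^{2}}$ at the price of a smaller $c_{2}$ (legitimate since $\ell_{N}^{2\alpha}$ dominates $\log\ell_{N}$), gives: conditionally on $\mathcal{D}_{G}$, with probability $\geq1-4e^{-c_{3}\ell_{N}\epsilon_{N}^{2}}$ the posterior mass of $\{r:|r-r_{n,n'}|>c_{1}\epsilon_{N}\}$ is at most $e^{-c_{2}\ell_{N}\epsilon_{N}^{2}}$. Taking complements, integrating out $\mathcal{D}_{G}$ (the bound being uniform in it), and relabelling constants completes the argument.

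The step I expect to require the most care is the uniformity of the Wong--Shen constants over the realized value of $r_{n,n'}$, which is data-dependent and can sit arbitrarily close to $1$ --- precisely when the true $|x_{n}|$ and $|x_{n'}|$ are nearly equal, forcing the threshold $|\hat{\Tilde{x}}_{n}-\hat{\Tilde{x}}_{n'}|$ to be tiny. The helpful fact is that the Bernoulli Fisher information only grows toward the endpoints, so an $\order{\epsilon_{N}}$-accurate estimate of $r$ still holds there; but the Hellinger-to-$|r-r'|$ comparison and the trivial denominator bound must be phrased with one-sided neighborhoods near an endpoint, and one should confirm the bracketing constants in inequality (3.1) can be taken uniform over all of $[0,1]$ rather than over a fixed compact subinterval. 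A minor secondary point is to pin down which prior on $r$ the notation $p(\cdot\mid\{Z_{y}\}_{y\in J})$ refers to and to check it is bounded below near $r_{n,n'}$; for the priors used in the paper this is immediate but deserves an explicit line.
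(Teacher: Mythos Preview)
Your proposal is correct and takes essentially the same approach as the paper, which simply asserts that Lemma \ref{LemmaPost2} follows ``by the same argument as in Lemma \ref{LemmaPost}'' and gives no further detail. You have spelled out the transcription to the Bernoulli family (entropy check, Hellinger-to-parameter comparison, denominator bound, endpoint care) more carefully than the paper does, but the strategy is identical.
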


We observe that this will allow us to learn whether $\Tilde{x}_{n}, \Tilde{x}_{n'}$ have the same signs (as long as both are far from 0). More precisely, for $j \in J$, define $Y_{j} = \textbf{1}_{B_{j}}$, where 
\begin{align}
    B_{j} = \{ \max(\delta_{nj}, \delta_{n'j}) > | \hat{\Tilde{x}}_{n} + \hat{\Tilde{x}}_{n'}| \}.
\end{align}
By the same argument as the one immediately following Equation \eqref{EqAjDef}, the $Y_{j}$ are i.i.d. Bernoulli. Denote by $q_{n,n'}$ their common parameter. The following is a direct calculation with Gaussians\footnote{If $\sigma = 0$, we'd just look at the probability that the latent position is in the interval $(-\min(|\hat{\Tilde{x}}_{n}|, |\hat{\Tilde{x}}_{n'}|), \min(|\hat{\Tilde{x}}_{n}|, |\hat{\Tilde{x}}_{n'}|))$, for which this is obvious. Since $\sigma > 0$, a complete calculation needs to add in a few additional cases. These doesn't substantially change the results from the trivial case.}:

\begin{lemma} \label{LemmaGaussianObs}
There exists $C,D > 0$ depending on $\sigma$ so that, for all $N$ sufficiently large, the following implication holds:
\begin{align}
    \{ \min(|\hat{\Tilde{x}}_{n}|, |\hat{\Tilde{x}}_{n'}|) > C \epsilon_{N} \} \Rightarrow \{|r_{n, n'} - q_{n, n'}| > D \, \epsilon_{N}\}.
\end{align} 
\end{lemma}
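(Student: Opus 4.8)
The plan is to turn the claimed separation between $r_{n,n'}$ and $q_{n,n'}$ into a lower bound on the conditional probability that $\max(\delta_{nj},\delta_{n'j})$ falls into a short interval, and then to get that bound from a direct truncated-Gaussian computation. Write $a=\hat{\Tilde{x}}_{n}\ge 0$ and $b=\hat{\Tilde{x}}_{n'}\ge 0$; these are posterior medians of the nonnegative quantities $|x_n|,|x_{n'}|$, so they are nonnegative and measurable with respect to the conditioning information, which I take to be $\mathcal{F}=\sigma(\{\delta_{ny}\}_{y\in G_{n,N}^{(k(n))}},\{\delta_{n'y}\}_{y\in G_{n',N}^{(k(n'))}},x_n,x_{n'})$ (including $x_n,x_{n'}$ is what actually makes the $Z_j$ conditionally i.i.d.\ when $\sigma>0$). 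Since $|a-b|\le a+b$, the event $B_j$ is contained in $A_j$, and therefore
\begin{align*}
r_{n,n'}-q_{n,n'} &= \mathbb{P}(A_j\setminus B_j \mid \mathcal{F}) \\
&= \mathbb{P}(|a-b| < \max(\delta_{nj},\delta_{n'j}) \le a+b \mid \mathcal{F}) \ \ge\ 0 .
\end{align*}
The window $(|a-b|,a+b]$ has length exactly $(a+b)-|a-b|=2\min(a,b)>2C\epsilon_{N}$ on the hypothesized event, by the elementary identity for absolute values. So it suffices to show the conditional law of $\max(\delta_{nj},\delta_{n'j})$ given $\mathcal{F}$ assigns mass at least $D\epsilon_{N}$ to this window.

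Next I would establish a density lower bound for $\max(\delta_{nj},\delta_{n'j})$ away from the boundary. Because $j\in J$ is disjoint from $G_{n,N}^{(k(n))}\cup G_{n',N}^{(k(n'))}$ and distinct from $n,n'$, the point $x_j$ is independent of $\mathcal{F}$ and still distributed as $N_{I}(0,1)$; conditionally on $\mathcal{F}$ and $x_j$, the observations $\delta_{nj}$ and $\delta_{n'j}$ are independent truncated Gaussians on $(0,M)$ with means $|x_n-x_j|,|x_{n'}-x_j|\in[0,2R]$, where $[-R,R]$ contains the compact support of the latent distribution. Writing the density of the maximum as $f_1F_2+f_2F_1$ in terms of truncated-Gaussian densities $f_i$ and CDFs $F_i$, one checks it is continuous, strictly positive on every compact subinterval of $(0,M)$, and bounded below there by a constant $\kappa=\kappa(\sigma,M,R)>0$, uniformly over the admissible means; integrating out $x_j$ and the compactly supported conditional law of $(x_n,x_{n'})$ preserves the bound. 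When the window $(|a-b|,a+b]$ lies inside such a fixed compact subinterval of $(0,M)$, one extracts a subwindow of width $2C\epsilon_{N}$ on which the density is $\ge\kappa$, which gives $r_{n,n'}-q_{n,n'}\ge 2\kappa C\epsilon_{N}$, so any $D<2\kappa C$ works.

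The delicate part — and the source of the ``additional cases'' flagged in the footnote to the statement — is the boundary behavior, and I expect it to be the main obstacle. The density of $\max(\delta_{nj},\delta_{n'j})$ is of order $t$ as $t\downarrow 0$ (both $F_i$ vanish at $0$), and the truncated normal carries no mass above $M$, so the clean density bound fails exactly when the window hugs $0$ or when $a+b$ reaches $M$. One therefore splits on the location of the window. If $a+b$ is bounded away from $0$ and stays below $M$ on the relevant event — which one argues from the compact supports of the priors, after coordinating the prior support with $M$ — the argument above applies verbatim. In the complementary regime, where $a$ and $b$ are both of order $\epsilon_{N}$ and the window collapses to a neighbourhood of $0$, a direct estimate gives only $r_{n,n'}-q_{n,n'}\asymp \epsilon_{N}^{2}$; this is harmless for the use of this lemma in Theorem~\ref{ThmPostConst}, because there the true locations $x_n,x_{n'}$ are themselves $O(\epsilon_{N})$, so the sign ambiguity this lemma is invoked to break would only cost $O(\epsilon_{N})$ in the final bound and such pairs need not be resolved. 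Apart from this case analysis, everything reduces to the identity $(a+b)-|a-b|=2\min(a,b)$, monotonicity of tail probabilities, and standard Gaussian estimates; the one genuinely technical point is making the density lower bound $\kappa$ uniform over the nuisance means in $[0,2R]$ together with the bookkeeping for the two boundary regimes.
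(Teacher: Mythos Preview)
Your approach is essentially what the paper has in mind: the paper offers no detailed proof, only the assertion that this is ``a direct calculation with Gaussians'' together with a footnote that, when $\sigma=0$, the separation $|r_{n,n'}-q_{n,n'}|$ reduces to $\mathbb{P}\bigl(x_j\in(-\min(a,b),\min(a,b))\bigr)\asymp\min(a,b)$. Your window identity $(a+b)-|a-b|=2\min(a,b)$ is precisely this computation phrased at the level of the observed $\max(\delta_{nj},\delta_{n'j})$ rather than the latent $x_j$, and the two coincide when $\sigma=0$.

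You are in fact more careful than the paper about the boundary regime. When both $a$ and $b$ are of order $\epsilon_N$, the density of the maximum is $O(t)$ near $0$ and the window carries only $O(\epsilon_N^2)$ mass, so for $\sigma>0$ the conclusion $|r-q|>D\epsilon_N$ can fail there; the footnote simply asserts the extra cases ``don't substantially change the results'' without working them out. Your remedy---observing that in Theorem~\ref{ThmPostConst} such pairs fall into case~1, where the sign need not be resolved---closes the gap for the application; it leaves the lemma as literally stated not quite proved in that corner, but the paper does not do better. Your side remark that one must also condition on $x_n,x_{n'}$ (not only on the $\delta$'s in the $G$-blocks) to make the $Z_j$ genuinely i.i.d.\ is likewise a correct refinement of the paper's setup.
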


\subsection{Completing the Proof}

We complete the proof of Theorem \ref{ThmPostConst}.

\begin{proof}
For constants $c_{1},c_{2},c_{3}$ to be determined later, we define events 
\begin{align}
    \mathcal{A}_{N} = \left\{ \forall n \in [N], k \in [K], \, p(\{u \, : \, \min(|u - x_{n}|, |u + x_{n}|) \leq c_{1} \epsilon_{N}\} | \{\delta_{ny}\}_{y \in G_{n,N}^{(k(n))}}) \geq 1 - e^{-c_{2} \ell_{N} \epsilon_{N}^{2}} \right\}
\end{align} 
and
\begin{align}
    \mathcal{B}_{N} = \left\{ \forall n,n' \in [N], \, p(\{ r \, : \, |r-r_{n,n'}| \leq c_{1} \epsilon_{N}\} | \{Z_{y}\}_{y \in J}) \geq 1 - e^{-c_{2} \ell_{N} \epsilon_{N}^{2}} \right\}.
\end{align}
Since we have chosen $\epsilon_{N} = \ell_{N}^{-0.5 + \alpha}$ for some $0 < \alpha < 0.1$, we have that $\ell_{N} \epsilon_{N}^{2} \geq \frac{1}{2} N^{\alpha}$ for all $N$ sufficiently large. Thus, by Lemmas \ref{LemmaPost} and \ref{LemmaPost2}, we know that $\mathcal{A}_{N}$ and $\mathcal{B}_{N}$ occur asymptotically almost surely.

On the event $\mathcal{A}_{N}$, we correctly recover $|x_{n}^{(N)}|$ up to additive error $O(\epsilon_{N})$. We now fix a large constant $C$ and consider two cases:
\begin{enumerate}
    \item When $|x_{n}^{(N)}| \leq C \epsilon_{N}$, recovering $|x_{n}^{(N)}|$ up to additive error $O(\epsilon_{N})$ also means recovering $x_{n}^{(N)}$ up to additive error $O(\epsilon_{N})$.
    \item When $|x_{n}^{(N)}| \geq C \epsilon_{N}$ for fixed $C$ sufficiently large,  Lemma \ref{LemmaGaussianObs} implies that on $\mathcal{B}_{N}$ we also recover the sign of $x_{n}^{(N)}$.
\end{enumerate}
Thus, in either case, we recover $x_{n}^{(N)}$ up to additive error $O(\epsilon_{N})$.
\end{proof}

\subsection{Extending Theorem \ref{ThmPostConst} to Higher Dimensions} \label{SecHigherDimTheory}

It is natural to ask if Theorem \ref{ThmPostConst} holds in higher dimensions. The answer appears to be ``yes," but the only proofs that we are aware of have at least one of the following two substantial flaws: they are noticeably longer or give constants $C$ that scale very poorly with dimension. We give here a quick sketch of a proof that closely mimics our one-dimensional argument. It requires no new ideas, but gives bounds that scale very poorly with respect to dimension.

In our proof of Theorem \ref{ThmPostConst}, we invoke Theorem 1 of \cite{PostConcWong95} twice: once in Lemma \ref{LemmaPost} on the ``single row" $G_{n,N}^{(k(n))}$ to show that we have learned $|x_{n}|$ with high accuracy, and again in Lemma \ref{LemmaPost2} on the ``pair of rows with large intersection" $J$ to show that we have learned the sign of $x_{n}$ (as long as $|x_{n}|$ is sufficiently large). To extend this to a higher dimension $D$, we would invoke Theorem 1 of \cite{PostConcWong95} $(D+1)$ times. On the first invocation, we would show that the posterior distribution of $x_{n}$ concentrates near a $(D-1)$-dimensional set that contains the true point. For $1 \leq d \leq D$, in the $d$'th invocation, we would show that we have learned that $x_{n}$ is on a certain subset of dimension $(D-d)$ with high accuracy by looking at $d$ rows of the matrix. Thus, after $D$ invocations, we would have shown that $x_{n}$ is recoverable up to a set of dimension 0. These calculations are nearly identical to the calculations in the current proof. 

The last invocation would be used to deal with ambiguity on a finite set, as in the one-dimensional case. Most of the required changes would be routine (e.g., in part 2 of Assumption \ref{GraphAssumption}, we would need intersections of (D+1) parts of the partition to support our (D+1) invocations of Theorem 1 of \cite{PostConcWong95}). The biggest change comes in proving the natural analogue to Lemma \ref{LemmaGaussianObs}. This calculation is what describes a quantitative sort of identifiability for the model. To extend our arguments to higher dimensions, we need a result along the lines of: ``the set of latent points that (i) lie in a set of dimension $(D-d+1)$ and (ii) have a given expected distance $r$ will lie in a reasonably nice set of dimension $(D-d)$." Ignoring truncations, one can easily check see that the following is true in dimension $d=1$: for $r > 0$, the set of points $x$ such that $E[\delta_{nn'} | x_{n} = x] = r$ is a sphere. When we allow for truncations in dimension $d=1$, we merely need to slice off part of the sphere, and so Lemma \ref{LemmaGaussianObs} is straightforward. For fixed $d > 1$, the various truncations and conditionings involved in repeatedly using this calculation will result in repeated application of unions, intersections and truncation operations to these spheres. In dimension $d=2$, proving the resulting analogue of Lemma \ref{LemmaGaussianObs} in this way is a straightforward but very messy calculus exercise. Unfortunately, we see no easy way to do quick calculations on the resulting set in arbitrary dimension, and no way at all to obtain estimates with reasonable dependence on $D$.

\section{Additional Plots} \label{sec:Appendix B}

Figures \ref{fig:mse_LM} and \ref{fig:sBMDS.rawtime.lm} are analogous to Figures \ref{fig:mse_BAND} and \ref{fig:sBMDS.rawtime.band} from Section \ref{sec:Results}, but under the sparse model using landmarks (L-sBMDS). Figure \ref{fig:mse_LM} demonstrates that very few landmarks are necessary to achieve high accuracy relative to the number of data points. Figure \ref{fig:sBMDS.rawtime.lm} plots the raw speed-ups, varying the number of landmarks as the number of data points increases. Finally, Figure \ref{fig:fluSEDextra} illustrates the posterior distribution of the strain-specific diffusion rates under the B-sBMDS/HMC model using 100 and 200 bands. When the number of bands is 200, we see no apparent difference from the full BMDS plot in Figure \ref{fig:fluSED}.

\setcounter{figure}{0}
\counterwithin{figure}{section}

\begin{figure}[H]
	\centering
	\includegraphics[width= 0.9 \textwidth]{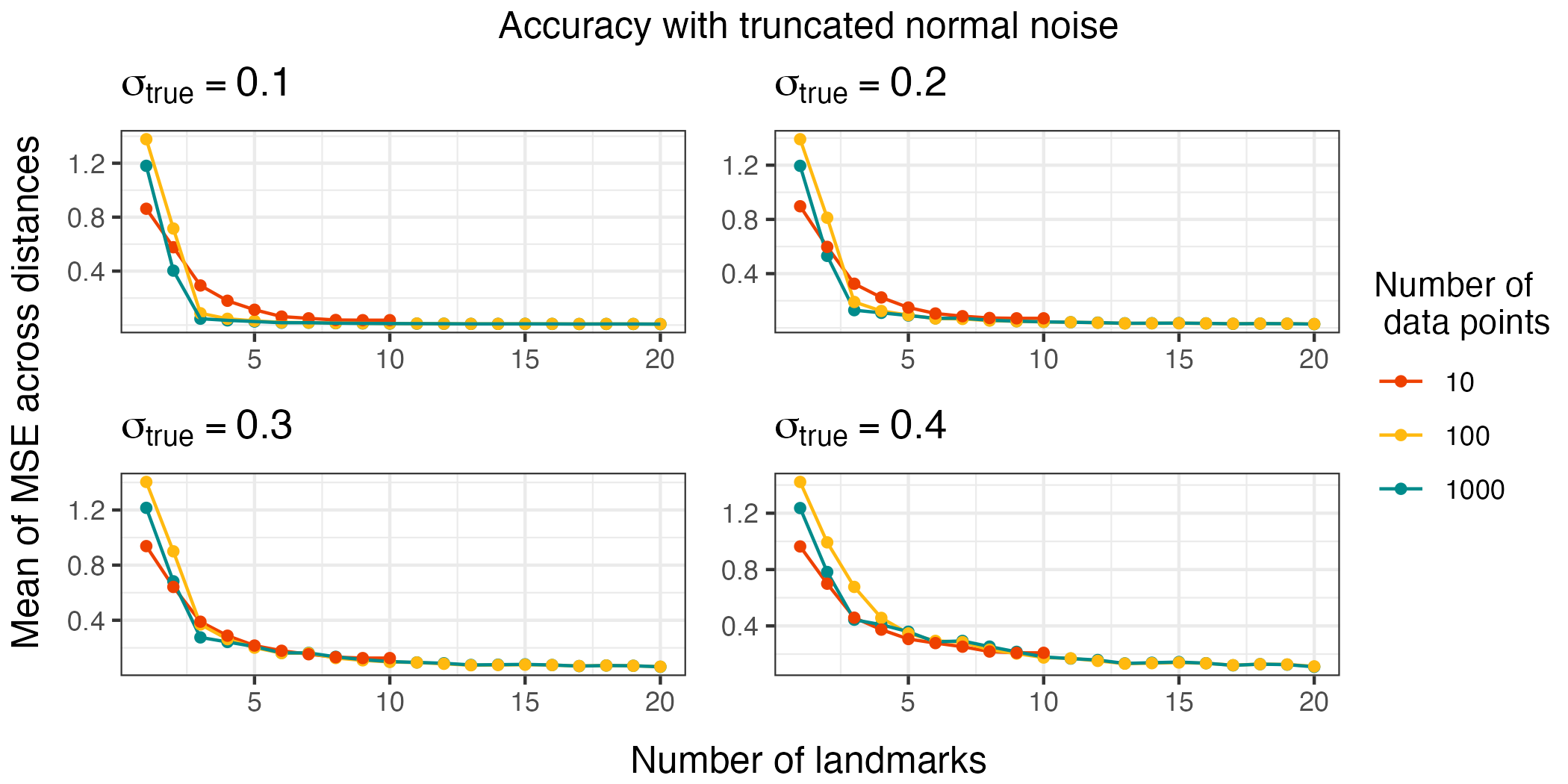}
	\caption{The mean of the mean squared error (MSE) across all distances using $1$ to $10$ landmarks for $10$ data points and $1$ to $20$ landmarks for $100$ and $1{,}000$ data points. We estimate Euclidean distances from the inferred locations obtained using an adaptive Hamiltonian Monte Carlo algorithm under landmark sparse Bayesian multidimensional scaling (L-sBMDS). $\sigma^2_{true}$ is the variance component of the truncated normal noise centered at $0$ added to the ``true" distance matrix such that $\sigma_{true}$ corresponds to the BMDS error standard deviation $\sigma$}
 \label{fig:mse_LM}
\end{figure}

\begin{figure}[H]
	\centering
	\includegraphics[width= 0.9 \textwidth]{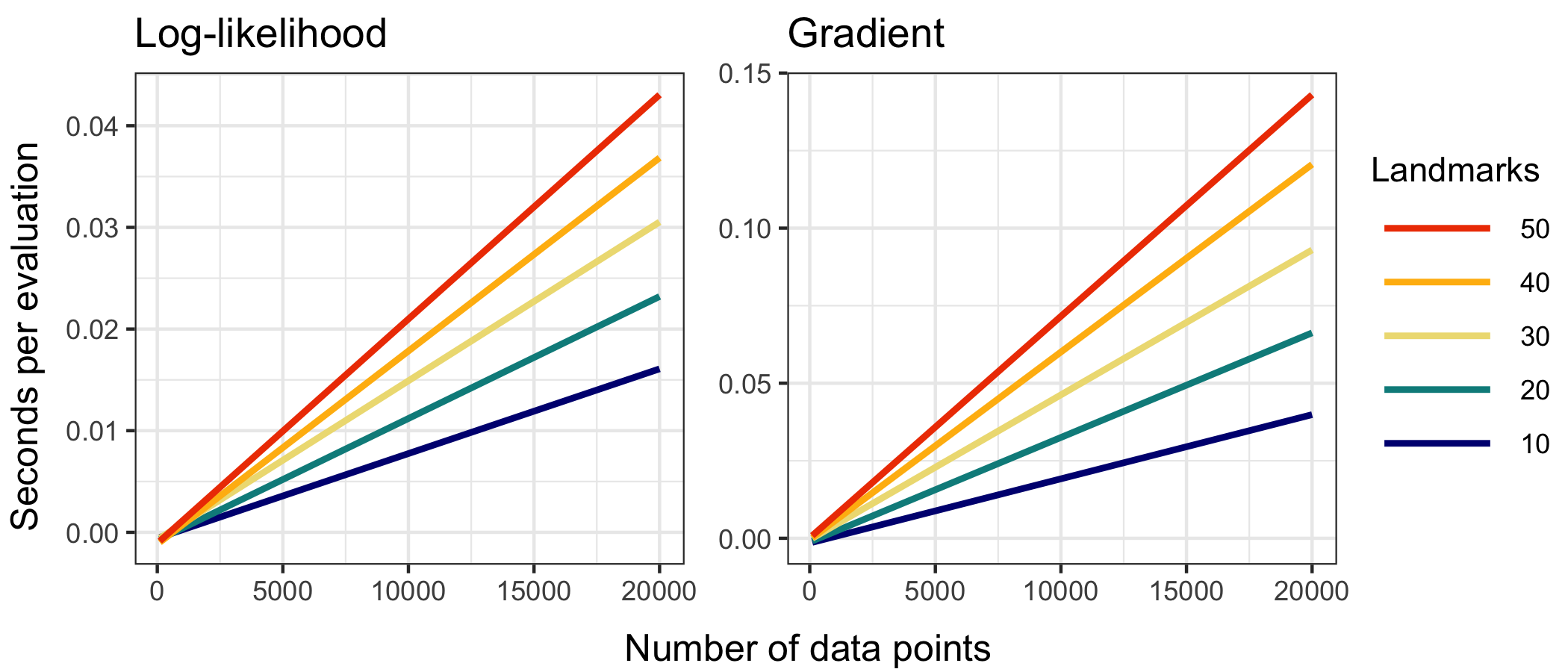}
	\caption{Time elapsed to calculate the landmark sparse Bayesian multidimensional scaling (L-sBMDS) likelihood and gradient using $L$ landmarks as a function of the number of data points.}
 \label{fig:sBMDS.rawtime.lm}
\end{figure}

\begin{figure}[H]
	\centering
	\includegraphics[width= 0.9 \textwidth]{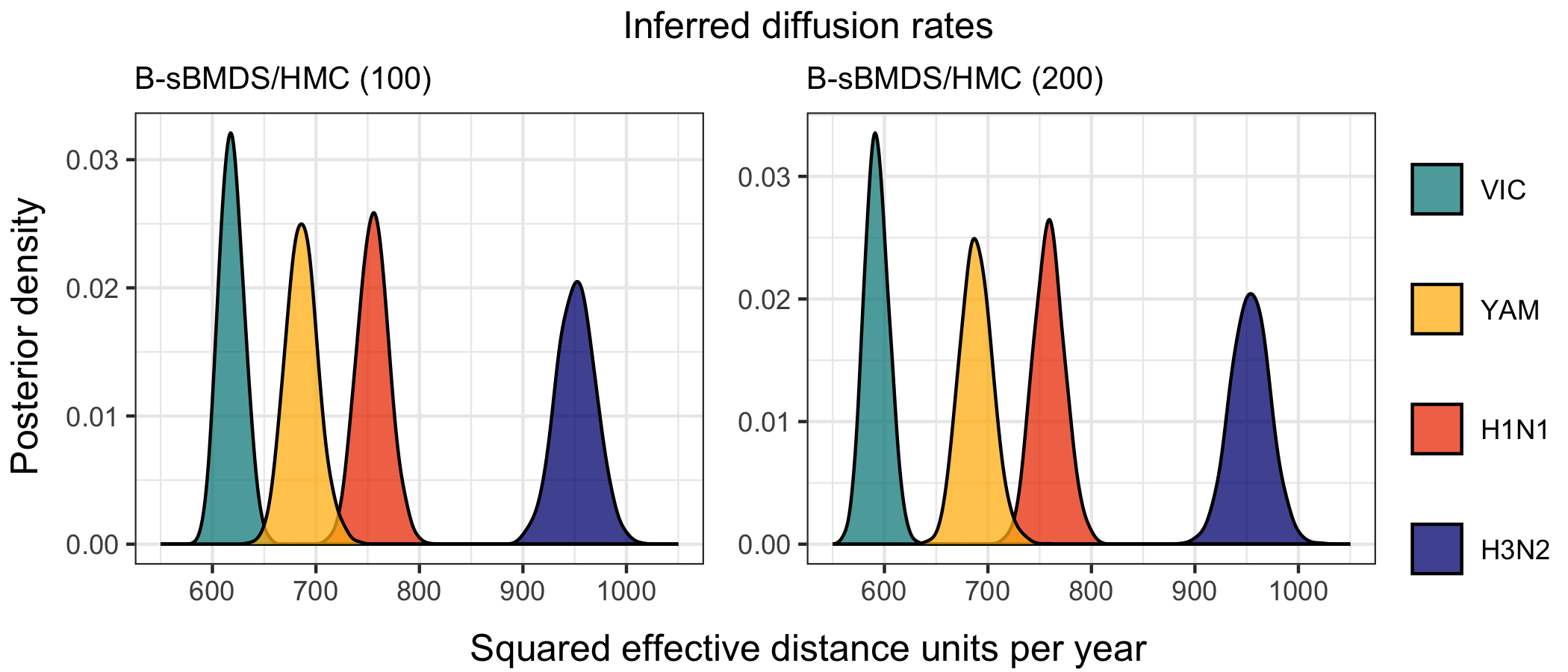}
	\caption{Posterior distribution of strain-specific diffusion rates inferred from 6-dimensional Bayesian phylogenetic multidimensional scaling with effective world-wide air traffic space distances for data. B-sBMDS/HMC uses 100 (left) and 200 (right) bands to compute the sparse banded likelihood and gradient for inference within the Hamiltonian Monte Carlo algorithm.}
 \label{fig:fluSEDextra}
\end{figure}

\begin{figure}[H]
	\centering
	\includegraphics[width= 0.9 \textwidth]{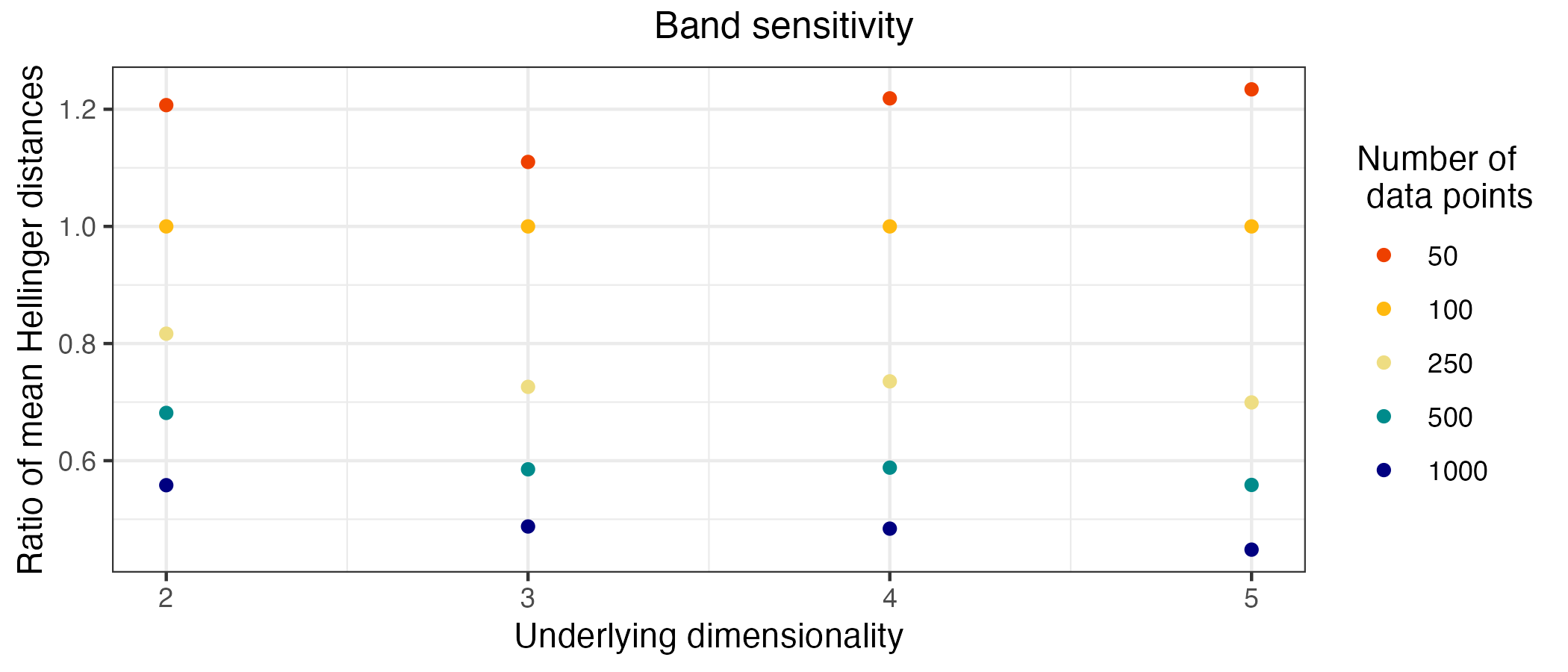}
	\caption{Ratio of the mean Hellinger distance for various dimensionalities ($D$) and number of data points ($N)$. Hellinger distances are computed from the posterior distributions of the estimated Euclidean distances between full BMDS and banded sBMDS with $\lceil D \sqrt{N} \rceil$ bands. Data is generated under Gaussian assumptions as described in Section \ref{sec:Results}. For fixed $D$, we calculate the ratio as the mean Hellinger distance with $N$ data points and with $N = 100$. The near horizontal lines across the underlying dimensionality indicate that the selected number of bands obtains results that are very similar to the full model despite the increase in dimensions. The ratio of the Hellinger distance is smaller for larger $N$ as error goes down at some statistical rate, e.g., of $\frac{1}{\sqrt{N}}$.}
 \label{fig:band_sens}
\end{figure}

\section*{Acknowledgments}

This work was supported by the NIH (K25 AI153816) and the NSF (DMS 2152774 and DMS 2236854).

\bibliography{references}

\end{document}